\newtheorem{lem}{Lemma}
\newtheorem{remark}{Remark}
\newtheorem{theorem}{Theorem}
\newtheorem{proposition}{Proposition}
\title{A Perception-feedback position-tracking control for quadrotors}
\newif\ifuniqueAffiliation
\author{ \href{https://orcid.org/0000-0002-8931-3841}{\includegraphics[scale=0.06]{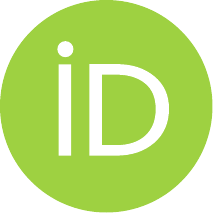}\hspace{1mm}Eduardo~Esp\'indola} \\
	School of Information Science and Engineering\\
	NingboTech University\\
	Ningbo 315100, Zhejiang, China \\
	\texttt{eespindola@zju.edu.cn} \\
	\And
	\href{https://orcid.org/0000-0003-3305-3179}{\includegraphics[scale=0.06]{orcid.pdf}\hspace{1mm}Yu~Tang}\thanks{Corresponding author.} \\
	School of Information Science and Engineering\\
	NingboTech University\\
	Ningbo 315100, Zhejiang, China \\
	\texttt{tangy@nbt.edu.cn} \\
}
\newbox{\orcid}\sbox{\orcid}{\includegraphics[scale=0.06]{orcid.pdf}} 
\author[1]{%
	\href{https://orcid.org/0000-0000-0000-0000}{\usebox{\orcid}\hspace{1mm}David S.~Hippocampus\thanks{\texttt{hippo@cs.cranberry-lemon.edu}}}%
}
\author[1,2]{%
	\href{https://orcid.org/0000-0000-0000-0000}{\usebox{\orcid}\hspace{1mm}Elias D.~Striatum\thanks{\texttt{stariate@ee.mount-sheikh.edu}}}%
}
\affil[1]{Department of Computer Science, Cranberry-Lemon University, Pittsburgh, PA 15213}
\affil[2]{Department of Electrical Engineering, Mount-Sheikh University, Santa Narimana, Levand}
\begin{document}
\maketitle

\begin{abstract}
	In this paper a position-tracking controller for quadrotors based on perception feedback is developed, which directly uses measurements from onboard sensors such as low cost IMUs and GPS to generate the control commands without state estimation.  Bias in gyros sensors are corrected to enhance the tracking performance. Practical stability of the origin of the tracking error system in the presence of external disturbances is proved using the Lyapunov analysis, which turns out to exponential stability in the absence of external disturbances. Numerical simulations are included to illustrate the proposed control scheme and to verify the robustness of the proposed controller under noisy measurements and parameter uncertainties.
\end{abstract}

\keywords{Exponential tracking \and Nonlinear control \and Perception feedback \and Quadrotors.}

\section{Introduction}\label{sec1}

Unmanned aerial vehicles (UAVs), in particular quadrotors, have attracted increasing attention for their versatility in a wide range of applications \citep{mahony2012multirotor,grzonka2011fully,serra2016landing} and the challenges posed in theory due to the nature of subactuation and the topological constraints on the rotation group \citep{hua2013introduction,chen2016novel,liu2016robust,zuo2014adaptive,zhao2014nonlinear,yu2016global}. Existing methods use the vector thrust approach to address the subactuation issue, where the propulsive thrust vector in the body-fixed frame is first determined through a desired attitude by an outer position controller, then the desired attitude is tracked through the applied torque by an inner rotation controller \citep{hua2009control,roberts2010adaptive,lee2013nonlinear}.  
To avoid the singularity issue in the position controller, the saturation control technique is commonly employed \citep{lee2013nonlinear}. The stability of the overall system is established by the Lyapunov analysis of cascade systems \citep{bertrand2011hierarchical,wang2014trajectory,roza2014class}.

Most existing work assumes that attitude measurements are available in terms of the rotation matrix \citep{lee2013nonlinear,roza2014class} and unit quaternions 
\citep{abdessameud2010global,guerrero2011bounded,xian2015nonlinear}
or Euler angles \citep{hamel2002dynamic,castillo2004real}. However, the attitude is not measured directly but is only reconstructed by an observer from the inertial measurements provided by IMUs or cameras \citep{rehbinder2003pose,mahony2008nonlinear}. In addition, when a dynamic process is used to reconstruct the attitude, the overall system must be analyzed to ensure stability, since the separation principle generally does not hold for nonlinear systems. Despite the importance of this issue, less attention has been paid in the literature, and only a few works have been reported on position regulation \citep{tayebi2013inertial,roberts2013new} using vector measurements.

Indeed, learning-based methods have sprung up for the design of perception-feedback controllers (cf. \cite{hwangbo2017control} and references therein). However, these methods require a large amount of data for training, which is expensive. In addition, parameters in a neural network may be quite large, making it difficult to solve most learning problems, and the process of obtaining the final control command lacks transparency \citep{kratsios2025generative}.  On the other hand, first principle models have been well developed for UAVs; controllers properly designed based on these models may have features such as guaranteed stability and robustness, easy implementation, and fewer on-line computational demands; however, they can perform poorly in the presence of model uncertainties and external disturbances \citep{liu2016robust}. Therefore, a combined approach to leverage the salient features of both model-based and learning-based approaches to address challenges in real applications is needed \citep{song2023reaching}.

This work presents an attempt to reduce the gap between these two approaches by designing a position-tracking controller for quadrotors that directly uses measurements from low-cost sensors onboard, such as IMUs and GPS, without state estimation. Following the vector thrust approach, the controller consists of a position feedback loop and an attitude feedback loop. 
The former determines the vector thrust to follow a desired position trajectory, while the latter is designed to track the desired attitude required for position control based on the method proposed in \cite{espindola2023attitude}, which uses the alignment errors between the vector measurements and the desired directional vectors for feedback, bypassing the need to estimate the attitude. Therefore, the proposed position-tracking controller requires only position measurements and vector measurements of at least two known inertial reference vectors. These measurements can be acquired by low-cost sensors such as GPS, IMUs or CCD cameras. Because the angular velocity measurement is biased in low-cost gyros, the bias is corrected in the control synthesis to enhance tracking performance. The overall system is analyzed using Lyapunov stability theory, and the practical stability (the uniform ultimate boundedness of solutions) of the tracking error system is proven, which implies exponential stability in the absence of external disturbances. To the best of our knowledge, this is the first result reported in the literature for trajectory tracking in quadrotors using measurements from sensors directly with exponential convergence in the presence of gyro bias and external disturbances. As a by-product, a technical result is proven regarding an upper bound on the attitude error in terms of the vector alignment error stated in Lemma \ref{lem-Alignment}-(iv), which allows for the establishment of exponential stability when there are no external disturbances. 

The main contribution is, therefore, the design of a perception-based position tracking controller for quadrotors using directly the measurements from onboard sensors without state estimation. 
The main features can be stated as follows: (1) by bypassing the attitude estimation, the overall system suppresses critical points introduced by an attitude observer, allowing for a simpler analysis and practical issues when the initial conditions are near one of these critical points \citep{lee2015global}; (2) unknown constant bias in the angular velocity by a gyro is corrected, allowing for high tracking precision in low-cost applications; and (3) enhanced robustness to uncertainties and external disturbances is achieved by exponential convergence around the desired equilibrium.

The remainder of the paper is organized as follows. Section \ref{Sec:Prel} provides preliminaries, including the notation, equations of rotational and translational motion of a UAV, and the description of onboard measurements. In Section \ref{Sec:CtrlProb}, the tracking problem is formulated, and the desired attitude is defined in terms of the thrust vector and the desired yaw angle. In Section \ref{Sec:TrackCtrl}, the position control is designed first, followed by the attitude control, which is devised using vector measurements and a gyro-bias observer.
The attitude tracking objective is stated as an alignment problem between the vector measurements of at least two known non-collinear inertial reference vectors and the desired directional vectors determined by the desired attitude and the inertial reference vectors. Alignment errors are defined, and their properties, including the upper bound of the attitude error in terms of the vector alignment error (Lemma  \ref{lem-Alignment}), are discussed. 
are summarized. Finally, the overall system, consisting of the position controller, the attitude controller, and the gyro-bias observer, is analyzed together to ensure the almost semi-global exponential stability of a residual set of the origin in the overall system. The residual set reduces to the origin if external disturbances are absent.   Section \ref{Sec:Sim} presents numerical simulations to illustrate the theoretical results and verify the robustness of the tracking controller under noisy measurements and parametric uncertainty. Section \ref{Sec:Conc} provides conclusions. The appendix 
gives some results tailored to the proof of technical lemmas.

\section{Preliminaries}\label{Sec:Prel}

\subsection{Notation}
The vector norm $\| u\|=(u^T u)^{1/2}$ and the matrix norm $\|  A\|=\lambda_{\max}^{1/2}(A^T A)$, $\forall u\in \mathbb R^3$ and $\forall A \in \mathbb{R}^{n\times m}$ are used. The Frobenius norm is $\|A\|_{F} = \mathrm{tr}^{1/2}(A^{T}A), \forall A\in \mathbb R^{n\times n}$, with $\mathrm{tr}(\cdot)$ denoting the matrix trace. If the matrix $A$ is symmetric,  $A=A^{T}>0$ indicates a positive definite matrix, with $\lambda_{\max}(A)$ and $\lambda_{\min}(A)$ being its maximum and minimum eigenvalues, respectively. $I_n$ denotes the identity matrix of size $n\times n$, and $0_{n\times m}$ denotes a zero matrix of size $n\times m$. The {\it wedge} map $(\cdot)^{\wedge} \vcentcolon \mathbb{R}^{3} \to \mathfrak{so}(3)$ is an isomorphism between $\mathbb{R}^{3}$ and the space of skew-symmetric matrices $\mathfrak{so}(3)\vcentcolon = \{ A\in\mathbb{R}^{3\times 3} \mid A^{T}=-A \}$, whose inverse is denoted by the {\it vee} map $(\cdot)^{\vee} \vcentcolon \mathfrak{so}(3)\to \mathbb{R}^{3}$. The cross product is $u^{\wedge}v=u\times v, \ \forall u,v\in \mathbb R^3$. A unit sphere of dimension $n-1$ embedded in space $\mathbb{R}^{n}$ is denoted by $\mathcal{S}^{n-1}\vcentcolon = \{ u\in\mathbb{R}^{n} \mid u^{T}u= 1 \}$, while a $\mathbb{R}^{n}$-embedded ball of radius $r$ is expressed as $B_{r}\vcentcolon =\left\{ u\in\mathbb{R}^{n}\mid \|u\|\leq r \right\}$.

\subsection{Quadrotor model}
Consider the body frame $\mathbf{B}=\left\{ O_{B}, \{ e_{1} ,e_{2},e_{3} \} \right\}$ with the origin $O_{B}$ attached to the center of mass of the rigid body and an inertial frame $\mathbf{I} = \left\{ O_{I}, \{e_{x} ,e_{y},e_{z} \} \right\}$ as shown in Fig. \ref{fig:Quadrotor}. The attitude of the body frame relative to the inertial frame is then fully defined by the rotation matrix $R\in SO(3)\vcentcolon= \left\{ R\in\mathbb{R}^{3\times 3} \mid R^{T}R = RR^{T} = I_{3} , \mathrm{det}(R)=1  \right\}$. 

\graphicspath{ {./Figs/} }
\begin{figure}[h]
\centering
\includegraphics[scale=0.7]{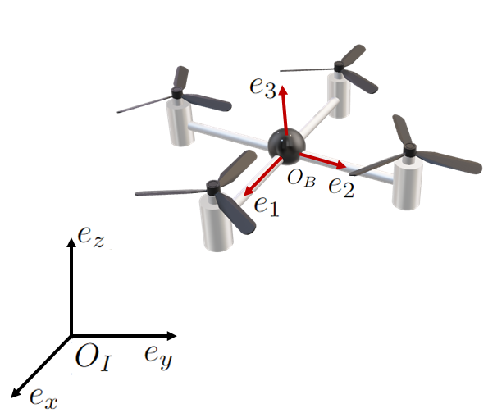}
\caption{Reference frames of the quadrotor.}
\label{fig:Quadrotor}
\end{figure}

The motion equations of a quadrotor UAV are given by 
\begin{align}
    \dot{x} &= v, \label{eq:KinX}\\
    m\dot{v} &= -mge_{z} + fRe_{z} +f_{d}, \label{eq:DynV}\\
    \dot{R} &= R(\omega)^{\wedge}, \label{eq:KinR} \\
    M\dot{\omega} &= (M\omega)^{\wedge}\omega + \tau +\tau_{d} , \label{eq:DynOm}
\end{align}
where $(x,v)\in\mathbb{R}^{3}\times\mathbb{R}^{3}$ is the position and linear velocity of origin $O_{B}$ expressed in the inertial frame, $(R,\omega)\in SO(3)\times \mathbb{R}^{3}$ is the attitude and body-fixed angular velocity of the quadrotor, $(f,\tau)\in \mathbb{R}\times \mathbb{R}^{3}$ is the thrust force and the control torque expressed in the body-fixed frame $\mathbf{B}$, and $(f_{d},\tau_{d})\in \mathbb{R}^{3}\times \mathbb{R}^{3}$ represents bounded external disturbances, such as aerodynamic effects, unmodeled dynamics,  and torque disturbances. Scalar $m>0$ denotes the mass of the quadrotor, $M=M^{T}\in\mathbb{R}^{3\times 3}$ denotes the positive definite inertia matrix, and $e_{z}=[0,0,1]^{T}$ represents the direction of the $z$ -axis of the inertial frame, expressed in the body-fixed frame $\mathbf{B}$.

\subsection{Measurements}
The quadrotor is assumed to be equipped with sensors that measure the state $(x,v)$ expressed in the frame $\mathbf{I}$, such as a GPS sensor in outdoor applications or CCD cameras for indoor applications. In addition, the vehicle has access to the vector measurements  $v_i\in \mathcal S^2$  in the body frame of $n\geq 2$ known constant reference vectors  $r_i\in \mathcal S^2$  in the inertial frame as
\begin{eqnarray}\label{eq:vi}
v_i = R^{T} r_{i}, \quad i=1,2,\cdots , n.
\end{eqnarray}
 Inertial measurement units (IMU) and CCD cameras are typical low-cost sensors used in UAVs that provide vector measurements. 

The angular velocity is provided by a gyroscope. MEMS-technology-based gyroscopes have a relatively large bias compared to high-precision gyroscopes, which must be corrected for control purposes. Therefore  angular-velocity measurement $\omega_g$ is modeled as follows
\begin{equation}\label{eq:VelM}
 \omega_{g} = \omega + b,
\end{equation}
where $b\in\mathbb{R}^{3}$ denotes the gyro bias.

The following assumptions are made for the control design.
\begin{itemize}
\item Assumption A1: Among the $n$ known constant inertial reference vectors $r_i\in \mathcal S^2$, there are at least two noncollinear vectors.
\item Assumption A2: Gyro bias $b$ is a constant unknown vector that satisfies $\|b\|\leq \mu_{b}$, for some known $\mu_b>0$. 
\end{itemize}

\section{Control problem statement}\label{Sec:CtrlProb}

Given a desired position trajectory $x_{d}(t)\in \mathbb R^3$ and a desired yaw angle $\psi_d(t) \in\mathbb R$, which are assumed to be sufficiently smooth, define the position-tracking errors 
\begin{align}
    \tilde{x} &= x- x_{d}, \label{eq:xTld}\\
    \tilde{v} &= v-\dot{x}_{d}.  \label{eq:vTld} 
\end{align}
By the translational motion equations \eqref{eq:KinX}-\eqref{eq:DynV} the error dynamics is 
\begin{align}
    \dot{\tilde{x}} &= \tilde{v}, \label{eq:xTldP}\\
    m\dot{\tilde{v}} 
        &= -mge_{z}- m\ddot{x}_{d}+T + f\left( R- R_{d}\right)e_{z}+f_d, \label{eq:vTldP} 
\end{align}
where $R_{d}\in SO(3)$ is the desired attitude to be determined and $T:=fR_d e_{z}\in \mathbb{R}^{3}$ is the thrust vector.
Note that the unit vector $Re_{z}\in\mathcal{S}^{2}$ in $T$ gives the thrust direction to guide the quadrotor to follow the desired position trajectory, while the thrust force $f=\|T\|$ provides the thrust magnitude. In addition, the thrust direction restricts the desired attitude, leaving the yaw angle $\psi\in\mathbb{R}$  as the only degree of freedom that can be controlled regardless of position. 
The desired attitude trajectory can be computed as \citep{lee2010geometric}
\begin{align}\label{eq:Rd}
    R_{d} &\vcentcolon = \left[ c_{1} \; c_{2} \; c_{3} \right], \quad \omega_{d} = \left(R^{T}_{d}\dot{R}_{d}\right)^{\vee}, 
\end{align}
where $c_1,\ c_2,\ c_3\in\mathcal{S}^{2}$ are the column vectors of $R_{d}$ given by
\begin{equation*}
  c_{3}  = \frac{T}{\|T\|}, \ c_{2} = \frac{c^{\wedge}_{3}c_{d}}{\|c^{\wedge}_{3}c_{d}\|}  ,\  c_{1} = c^{\wedge}_{2}c_{3},
\end{equation*}
being $c_{d} = \left[ \cos{(\psi_{d}(t))},\sin{(\psi_{d}(t)),0} \right]^{T}$ the desired yaw direction. 

Thus, the control objective is to design a control law for the thrust vector $T$ which gives the thrust force $f=\|T\|$ and the torque $\tau$ such that $(\tilde{x},\ \tilde v) \to(0_{3\times 1},0_{3\times 1})$ and $(\psi-\psi_d, \dot\psi-\dot \psi_d)\to (0,\ 0)$ with all signals bounded.

\begin{remark}
It is clear that \eqref{eq:Rd} is valid only when $f=\|T\|> 0$ and $c_{3}$, $c_{d}$ are noncollinear. Both conditions can be ensured by designing a saturated control thrust vector $T$ with its third component larger than zero.  Therefore, the thrust vector $T$ must be devised to achieve both the tracking objective $(\tilde{x},\ \tilde v) \to (0_{3\times 1},0_{3\times 1})$ and the ensuring $\|T\|> 0$. In addition, the thrust vector must be designed with sufficient smoothness to ensure $T\in\mathcal{C}^{2}$ so that 
the desired angular velocity $\omega_{d}(T,\dot{T},\psi_{d},\dot{\psi}_{d}) = \left(R^{T}_{d}\dot{R}_{d}\right)^{\vee}$ and its first time derivative $\dot{\omega}_{d}(T,\dot{T},\ddot{T},\psi_{d},\dot{\psi}_{d},\ddot{\psi}_{d}) =  \left(R^{T}_{d}\ddot{R}_{d} + \dot{R}^{T}_{d}\dot{R}_{d}\right)^{\vee}$ are smooth.
\end{remark}

\section{Tracking controller design}\label{Sec:TrackCtrl}

\subsection{Positioning controller design}

The following control law is proposed
\begin{align}
    T&= mge_{z}+m\ddot{x}_{d} + m\left( kI_{3}+ k_{x}I_{3}+K_{f}\right)y, \label{eq:CtrlPs}\\
    \dot{e}_{f} &=\mathrm{Cosh}^{2}(e_{f})\Big( -K_{f}y + k^{2}_{x}\big( 1-\frac{1}{m}\big)\tilde{x} -k\eta \Big), \ \ e_{f}(0) = 0_{3\times 1},
   \label{eq:efp}\\
    \eta &= \tilde{v} + k_{x}\tilde{x} + y, \label{eq:eta}
\end{align}
where $y:=\mathrm{Tanh}(e_{f})$ is an auxiliary variable, $k>0$ and $k_{x}>0$ are scalar gains, $K_{f}=\text{diag}\{ k_{f,1}\ k_{f,2}\ k_{f,3}\}\in\mathbb{R}^{3\times 3}$, and the hyperbolic functions are given by
\begin{align*}
    \mathrm{Tanh}(u) &= \left[ \tanh{(u_{1})},\tanh{(u_{2})},\tanh{(u_{3})} \right]^{T}\in\mathbb{R}^{3},\\
    \mathrm{Cosh}(u) &= \mathrm{diag}\left\{ \cosh{(u_{1})},\cosh{(u_{2})},\cosh{(u_{3})} \right\}\in\mathbb{R}^{3\times 3},\\
    \mathrm{Sech}(u) &= \mathrm{diag}\left\{ \mathrm{sech}(u_{1}),\mathrm{sech}(u_{2}),\mathrm{sech}(u_{3}) \right\}\in\mathbb{R}^{3\times 3},
\end{align*}
for $u=[u_1,u_2,u_3]^T\in\mathbb{R}^{3}$.

The following assumption is made.
\begin{itemize}
\item Assumption A3: The desired acceleration along the $z$-axis $\ddot{x}_{d,z}$ satisfies $|\ddot{x}_{d,z}|\leq \mu_{d}$ for some $\mu_d<g$.
\end{itemize}
This assumption is a realistic operating condition for most quadrotors. Therefore, if the control gains are chosen such that $k+ k_{x}+k_{f,3} <g -\mu_{d}$ then
\begin{align}\label{eq:TCond}
      0<g -\mu_{d}-(k+ k_{x}+k_{f,3})\leq \frac{\|T\|}{m}\leq g+\mu_d+k+ k_{x}+\lambda_{max}(K_f):=\frac{f_{max}}{m}.
\end{align}

The following proposition establishes the practical stability of the origin $(\tilde{x},\tilde{v})=0_{6\times 1}$ of the translational error dynamics \eqref{eq:xTldP}-\eqref{eq:vTldP} .

\begin{proposition}[\bf{Practical stability of position subsystem}]\label{thm1}
Consider the dynamics of translational errors \eqref{eq:xTldP}-\eqref{eq:vTldP} in closed loop with the control law \eqref{eq:CtrlPs}-\eqref{eq:eta}. Under Assumption A3, let the control gains be such that $0<k_x<g-\mu_d$, $k>k_x$, and $\lambda_{\min}(K_f)>\frac{k_x}{4m^2}$.
Then the origin $(\tilde{x},\tilde{v})=0_{6\times 1}$ is practically stable. 
\end{proposition}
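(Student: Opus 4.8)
The plan is to substitute the control into the translational error dynamics, observe that in suitable coordinates the closed loop is a linear time-invariant system driven by a bounded perturbation, and certify that its system matrix is Hurwitz by a quadratic Lyapunov function; input-to-state stability with respect to that perturbation then yields uniform ultimate boundedness (i.e.\ practical stability), which collapses to exponential stability of the origin when the perturbation vanishes.

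Substituting \eqref{eq:CtrlPs} into \eqref{eq:vTldP}, the feedforward $mge_z+m\ddot x_d$ cancels, so $\dot{\tilde v}=(kI_3+k_xI_3+K_f)y+d$ with $d:=\tfrac1m\big(f(R-R_d)e_z+f_d\big)$. The factor $\mathrm{Cosh}^2(e_f)$ in \eqref{eq:efp} is chosen so that, using $\dot y=\mathrm{Sech}^2(e_f)\dot e_f$ and $\mathrm{Sech}^2(e_f)\mathrm{Cosh}^2(e_f)=I_3$, the auxiliary state obeys the \emph{linear} law $\dot y=-K_f y+k_x^2(1-\tfrac1m)\tilde x-k\eta$. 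Taking $\eta=\tilde v+k_x\tilde x+y$ as a state (so $\tilde v=\eta-k_x\tilde x-y$), a short computation puts the closed loop, in $z:=(\tilde x,\eta,y)$, into the form $\dot{\tilde x}=-k_x\tilde x+\eta-y$, $\dot\eta=-\tfrac{k_x^2}{m}\tilde x-(k-k_x)\eta+ky+d$, $\dot y=k_x^2(1-\tfrac1m)\tilde x-k\eta-K_f y$, i.e.\ $\dot z=Az+Bd$ with constant $A$ (built from $k,k_x,K_f,m$) and $B$ selecting the $\eta$-component. Moreover, by \eqref{eq:TCond}, $0<\|T\|\le f_{max}$ under the gain restriction, so $R_d,\omega_d,\dot\omega_d$ are well defined; and since $\|(R-R_d)e_z\|\le2$ and $f_d$ is bounded, $\|d\|\le\delta$ for a known $\delta>0$.

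The core is to show $A$ is Hurwitz under the stated gains by exhibiting $P=P^T>0$ with $A^TP+PA<0$. I would try $P=\mathrm{diag}\{aI_3,bI_3,cI_3\}$ (allowing, if necessary, a small $\tilde x$-$\eta$ off-diagonal block) and $V=z^TPz$; differentiating along $\dot z=Az$, the antisymmetric coupling — $+ky$ in $\dot\eta$ against $-k\eta$ in $\dot y$ — cancels when $b=c$, leaving $\dot V=-ak_x\|\tilde x\|^2-b(k-k_x)\|\eta\|^2-cy^TK_f y$ plus indefinite $\tilde x$-$\eta$ and $\tilde x$-$y$ cross terms. The gain conditions are exactly what makes this negative definite: $k_x>0$ and $k>k_x$ keep the $\|\tilde x\|^2$ and $\|\eta\|^2$ coefficients strictly negative, and $\lambda_{\min}(K_f)>\tfrac{k_x}{4m^2}$, together with a suitable choice of $a,b,c$, makes the product of the surviving margins exceed the square of the leftover coupling coefficient, so $\dot V\le-\kappa\|z\|^2$ for some $\kappa>0$. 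Then along $\dot z=Az+Bd$ one gets $\dot V\le-\kappa\|z\|^2+2b\delta\|\eta\|$, which is negative outside a ball of radius $O(\delta)$; by the standard ultimate-boundedness lemma, $z$ — hence $(\tilde x,\tilde v)$, since $y$ is bounded — is uniformly ultimately bounded to a residual set proportional to $\delta$, which is the asserted practical stability. If $d\equiv0$ the same estimate gives $\dot V\le-\kappa\|z\|^2\le-(\kappa/\lambda_{\max}(P))V$, i.e.\ exponential stability of the origin.

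The main obstacle is the Hurwitz certificate with \emph{exactly} the stated gains: one must choose $a,b,c$ (and, if needed, the $\tilde x$-$\eta$ cross term in $P$) so that, after the automatic $\eta$-$y$ cancellation, the two residual indefinite cross terms are dominated by Young's inequality in precisely the way that produces $\lambda_{\min}(K_f)>\tfrac{k_x}{4m^2}$ — delicate because of the $m$-dependent coefficients $1-\tfrac1m$ and $\tfrac{k_x^2}{m}$ that the design introduces. One also has to keep the perturbation entering only linearly (through $\eta^Td$), so that the residual set genuinely scales with $\delta$ and collapses to $\{0\}$ when $d\equiv0$; and, since $e_f$ is itself a state, one should note that \eqref{eq:efp} keeps $e_f$ finite for all time (forward completeness) despite the $\mathrm{Cosh}^2(e_f)$ factor, so that $y=\mathrm{Tanh}(e_f)$ and the whole closed loop are well posed.
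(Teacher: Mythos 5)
Your proposal is correct and follows essentially the same route as the paper: the paper's $V_1=\frac{m}{2}\|\eta\|^2+\frac{k_x^2}{2}\|\tilde{x}\|^2+\frac{1}{2}\|y\|^2$ is exactly the diagonal quadratic form $z^TPz$ you propose (weights $m$, $k_x^2$, $1$ on $\eta$, $\tilde{x}$, $y$), with the same $\eta$--$y$ and $\tilde{x}$--$\eta$ cross-term cancellations, the surviving $\tilde{x}$--$y$ coupling $-\frac{k_x^2}{m}y^T\tilde{x}$ absorbed into a matrix $Q_1$ whose positive definiteness is precisely $k>k_x$ and $\lambda_{\min}(K_f)>\frac{k_x}{4m^2}$, and practical stability then drawn from the same ultimate-boundedness lemma applied to the bounded perturbation $\frac{1}{m}\bigl(f(R-R_d)e_z+f_d\bigr)$ with $f\leq f_{max}$. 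The ``main obstacle'' you flag (choosing the weights so the exact constant emerges) is resolved exactly as you sketch, so no gap remains.
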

\begin{proof}
The dynamics of position errors \eqref{eq:xTldP}-\eqref{eq:vTldP} in closed loop with the control law \eqref{eq:CtrlPs}-\eqref{eq:eta} is
\begin{align}
    \dot{\tilde{x}} &= \tilde{v} = \eta - k_{x}\tilde{x} -y, \label{eq:xTldPeta}\\
    \dot{\tilde v} &=(kI_3+k_x I_3+K_f)y +\frac{1}{m}\big(f(R-R_d)e_z+f_d\big). \label{eq:vTldPeta}
\end{align}
The dynamics of $y=\mathrm{Tanh}(e_{f})$ can be calculated using  $\frac{d}{dt}\mathrm{Tanh}(e_{f}) = \mathrm{Sech}^{2}(e_{f})\dot{e}_{f} $ and \eqref{eq:efp} as
\begin{equation}
    \dot y=
     -K_{f}y + k^{2}_{x}\left( 1-\frac{1}{m}\right)\tilde{x} -mk\eta.\label{eq:TefP}
\end{equation}
At last, taking  time derivative of \eqref{eq:eta} and substituting \eqref{eq:xTldPeta}-\eqref{eq:TefP} yields
\begin{align}
    \dot{\eta} &= \dot{\tilde{v}} + k_{x}\tilde{v} + \mathrm{Sech}^{2}(e_{f})\dot{e}_{f} \notag \\
    &=  (kI_3+k_x I_3+K_f)y +\frac{1}{m}\big(f(R-R_d)e_z+f_d\big) +k_{x}\left( \eta - k_{x}\tilde{x} -y\right) 
    +\left( -K_{f}y + k^{2}_{x}( 1-\frac{1}{m})\tilde{x} -k\eta \right) \notag \\
    &= -\big( mk -k_{x}\big) \eta+ky - \frac{k^{2}_{x}}{m}\tilde{x}+\frac{1}{m}\big(f(R-R_d)e_z+f_d\big). \label{eq:etaTldP}
\end{align}
Let $\zeta_{1} \vcentcolon = \left[ \eta^{T} \; \tilde{x}^{T}\; y^T \right]^{T}\in\mathbb{R}^{9}$ be the state of the position error dynamics in closed loop with the controller \eqref{eq:CtrlPs}-\eqref{eq:eta},  which has the origin $\zeta_{1}=0_{9\times 1}$ as unique equilibrium when $R=R_d$ and $f_d=0_{3\times 0}$. To study its stability, consider the following positive definite function
\begin{equation}\label{eq:V1}
V_{1}(\zeta_{1}) \vcentcolon = \frac{m}{2} \|\eta\|^{2} + \frac{k^{2}_{x}}{2}\|\tilde{x}\|^{2}+\frac{1}{2}\|y\|^{2} ,
\end{equation}
which is radially unbounded for all $\zeta_{1}\in\mathbb{R}^{9}$ and satisfies
\begin{align}
\frac{1}{2}\min\left\{  m,k^{2}_{x},1\right\}\|\zeta_{1}\|^{2}\leq &V_{1}(\zeta_{1}) \leq \frac{1}{2} \max\left\{ m,k^{2}_{x},1\right\}\|\zeta_{1}\|^{2}.
\end{align}
The time evolution of \eqref{eq:V1} along the error dynamics \eqref{eq:xTldPeta} and \eqref{eq:TefP}-\eqref{eq:etaTldP} is given by
\begin{align}\label{eq:V1p}
    \dot{V}_{1} &= m\eta^{T}\dot{\eta} + k^{2}_{x}\tilde{x}^{T}\dot{\tilde{x}} + y^{T}\dot y \notag\\
    &= \eta^{T}\Big( -\big( mk -k_{x}\big) \eta+ky - \frac{k^{2}_{x}}{m}\tilde{x}+\frac{1}{m}\big(f(R-R_d)e_z+f_d\big)\Big)  + k^{2}_{x}\tilde{x}^{T}\big( \eta - k_{x}\tilde{x} -y\big)  \notag\\
    &\quad + y^T\Big( -K_{f}y + k^{2}_{x}\big( 1-\frac{1}{m}\big)\tilde{x} -mk\eta \Big) \notag\\
    &= -m( mk -k_{x})\eta^{T} \eta - k^{3}_{x}\tilde{x}^{T}\tilde{x} -y^{T}K_{f}y - \frac{k^{2}_{x}}{m}y^T\tilde{x}+ \eta^T \big(f(R-R_d)e_z+f_d\big) \notag\\
    &\leq - \Big[\|\eta\| \ \|\tilde x\| \ \|y\|\Big]^TQ_1 \Big[\|\eta\| \ \|\tilde x\|  \ \|y\|\Big]
    +\|\eta\|\Big(f\|R-R_d\|+\|f_d\|\Big)  
\end{align}
where
\begin{equation}\label{eq:Q1}
    Q_{1} = \left[ \begin{array}{ccc}
        m(k-k_{x}) & 0 & 0 \\
         0 & k^{3}_{x} & -\frac{k^{2}_{x}}{2m} \\
         0 & -\frac{k^{2}_{x}}{2m} & \lambda_{min}(K_{f})
    \end{array}\right] \in \mathbb{R}^{3\times 3},
\end{equation}
which is positive definite if the gains are such as $0<k_x<g-\mu_d$, $k>k_x$, and $\lambda_{\min}(K_f)>\frac{k_x}{4m^2}$. Since $f =\|T\|\leq f_{max}$ by \eqref{eq:TCond}, the term $f\|R-R_d\|+\|f_d\| $ is bounded. Thus, the origin $\zeta_{1}=0_{9\times 1}$ is practically stable by Lemma \ref{lem-pracStab}-(ii). The attitude error $\|R-R_d\|$ can be rendered to converge to a residual set of zero by the attitude tracking control designed in the sequel.
\end{proof}

\subsection{Attitude controller design}
\noindent 
{\it Alignment errors.} 
The attitude controller is developed using vector and gyro measurements. Towards this purpose, let the desired direction be defined as follows:
\begin{equation}\label{eq:vdi}
    v_{d,i} = R^{T}_{d}r_{i}, \quad i=1,2,\cdots ,n. 
\end{equation}
Then, under Assumption A1 the convergence $v_{i}\to v_{d,i}$,  $i=1,2,\cdots , n$, ensures $R\to R_{d}$. Thus, the attitude control problem becomes a vector alignment problem. The alignment error can be measured by the following variables
\begin{align}
    \varepsilon (t) &= \sum^{n}_{i=1}k_{i}\left(1-v^{T}_{i}v_{d,i}\right) ,\label{eq:varEps}\\
    z(t) &= \sum^{n}_{i=1}k_{i}v^{\wedge}_{i}v_{d,i}, \label{eq:z}
\end{align}
where $k_{i}>0$ is the weight assigned to $i$th sensor according to its confidence level. The variable $\varepsilon\in\mathbb{R}$ relates the inner product between the actual direction $v_i$ and the desired direction $v_{d,i}$, and measures the deviation in both alignment and steering, while the variable $z\in\mathbb{R}^{3}$ obtained by the cross product measures only the misalignment. Note that $v_{i}=v_{d,i}$  yields $\varepsilon =0$ and $z=0_{3\times 1}$, while $v_{i}=-v_{d,i}$  gives $z=0_{3\times 1}$ and $\varepsilon = 2\sum^{n}_{i=1}k_{i}$. 

The dynamics of the alignment error variables \eqref{eq:varEps}-\eqref{eq:z} can be obtained by  \eqref{eq:KinR}, the desired attitude trajectory $\dot{R}_{d}=R_{d}(\omega_{d})^{\wedge}$, and $\dot{v}_{i}=v^{\wedge}_{i}\omega$ and $\dot{v}_{d,i}=v^{\wedge}_{d,i}\omega_{d}$ as follows
\begin{align}
    \dot{\varepsilon} &= z^{T}(\omega - \omega_{d}),\label{eq:varepsP} \\
    \dot{z} &= J(\omega - \omega_{d}) + z^{\wedge}\omega_{d}, \label{eq:zP}
    \end{align}
where $J\in\mathbb{R}^{3\times 3}$ is 
    \begin{align}
    J &\vcentcolon = \sum^{n}_{i=1}k_{i}\left(v^{\wedge}_{d,i}\right)^{T}v^{\wedge}_{i} , \label{eq:J}
\end{align}and is bounded by $\|J\|\leq \sum^{n}_{i=1}k_{i}$.

The following lemma relates these error variables with the attitude error defined below and gives an upper bound on the term $\|R-R_d\|$ in \eqref{eq:vTldP}, which is instrumental for subsequent control designs.

\begin{lem}[\bf{Alignment error variables $\varepsilon (t)$ and $z(t)$}]\label{lem-Alignment}\hfill 
\begin{enumerate}[(i)]
    \item Let the attitude error be defined as $\tilde{R}\vcentcolon =RR^{T}_{d}\in SO(3)$. Then $z=0_{3\times 1}$ implies $\tilde{R}=I_{3}$ or $\tilde{R}=R_{j}\vcentcolon = I_{3} + 2(v^{\wedge}_{w_j})^{2}$, for $j=1,2,3$, where $v_{w_j}\in\mathcal{S}^{2}$ are unit eigenvectors of the symmetric positive definite matrix \citep{tayebi2013inertial}
    \begin{equation}\label{eq:W}
        W \vcentcolon = -\sum^{n}_{i=1}k_{i}(r^{\wedge}_{i})^{2},
    \end{equation}
    with the associated eigenvalues 
    ordered, without loss of generality,  as $\lambda_{w,1}\geq \lambda_{w,2} \geq \lambda_{w,3}$.
    \item The variable $\varepsilon (t)$ satisfies $0\leq \varepsilon \leq 2\sum^{n}_{i=1}k_{i}$.
    \item For any $\alpha_1>0$ there exists $\beta>0$, such that
    \begin{equation}\label{eq:bcond}
      \alpha_{1}\varepsilon \leq \frac{\beta}{2}\| z \|^{2}, \ \ \forall t\geq 0 ,
    \end{equation}
    $\forall \tilde{R} \in  SO(3) \backslash \mathcal{B}_{\epsilon}$, where $\mathcal{B}_{\epsilon}\vcentcolon = \mathcal{B}_{\epsilon ,1} \cup \mathcal{B}_{\epsilon ,2} \cup \mathcal{B}_{\epsilon ,3}$ with
        \begin{align}\label{eq:BallQ}
            \mathcal{B}_{\epsilon ,j} &\vcentcolon = \left\{ \tilde{R}\in SO(3) \mid  \tilde{R} = R_{\sigma ,j} , \forall \sigma\in [0,\epsilon ] \right\} ,\notag\\
            R_{\sigma ,j} & \vcentcolon = I_{3}-2\sigma \sqrt{1-\sigma^{2}}v^{\wedge}_{w_j} + 2\left( 1-\sigma^{2}\right)(v^{\wedge}_{w_j})^{2},
        \end{align}
    are closed balls centered at $R_{j}\in SO(3)$, and (arbitrarily small) radius $\epsilon_{j} >0$, for $j=1,2,3$. Furthermore, the constant $\beta$ is given by
    \begin{equation}\label{eq:beta}
        \beta \geq 2\frac{\lambda_{w,1}}{\epsilon^{2}_{j}\lambda^{2}_{w,3}}\alpha_{1} , \;\forall j=1,2,3.
    \end{equation}
    \item Let $\bar{W}\in\mathbb{R}^{3\times 3}$ be defined as 
    \begin{equation}\label{eq:Wb}
        \bar{W} \vcentcolon = \sum^{n}_{i=1}k_{i}r_{i}r^{T}_{i},
    \end{equation}
    which is positive definite under Assumption A1 (cf. Lemma 2, \cite{tayebi2013inertial}). Let $\varpi \vcentcolon = \mathrm{tr}\left( \bar{W}^{-1}\right)>0$, then the following condition holds for all $\tilde{R}\in SO(3)\backslash \mathcal{B}_{\epsilon}$
    \begin{equation}\label{eq:RRd}
        \|R-R_{d}\| \leq \sqrt{\frac{\varpi \beta}{\alpha_{1}}}\|z\|,\quad \forall t\geq 0.
    \end{equation}
\end{enumerate}
\end{lem}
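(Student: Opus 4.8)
My plan is to work throughout in the angle--axis (Rodrigues) parametrization of the attitude error $\tilde{R}=RR_{d}^{T}=I_{3}+\sin\theta\,a^{\wedge}+(1-\cos\theta)(a^{\wedge})^{2}$, with $a\in\mathcal{S}^{2}$ and $\theta\in[0,\pi]$, which turns all four claims into elementary statements about one scalar function of $(a,\theta)$. The first step is to record three identities: using $(R^{T}u)^{\wedge}=R^{T}u^{\wedge}R$, so that $v_{i}^{T}v_{d,i}=r_{i}^{T}\tilde{R}r_{i}$ and $v_{i}^{\wedge}v_{d,i}=R^{T}r_{i}^{\wedge}\tilde{R}r_{i}$, together with the Rodrigues expansion of $\tilde{R}r_{i}$ and $W=\kappa I_{3}-\bar{W}$ with $\kappa:=\sum_{i}k_{i}=\mathrm{tr}(\bar{W})$, a direct computation yields
\begin{align*}
\varepsilon &= (1-\cos\theta)\,a^{T}Wa=\mathrm{tr}\big(\bar{W}(I_{3}-\tilde{R})\big),\\
\|z\|^{2} &= \sin^{2}\theta\,(a^{T}Wa)^{2}+2(1-\cos\theta)\,\big\|(I_{3}-aa^{T})Wa\big\|^{2},\\
\|R-R_{d}\|^{2} &= \|\tilde{R}-I_{3}\|^{2}=2(1-\cos\theta),
\end{align*}
where $a^{T}Wa=\sum_{i}k_{i}\|a\times r_{i}\|^{2}\in[\lambda_{w,3},\lambda_{w,1}]$ and $\lambda_{w,3}=\lambda_{\min}(W)>0$ by Assumption~A1. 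Producing these is the only genuine computation; part~(ii) is then immediate, since $v_{i},v_{d,i}\in\mathcal{S}^{2}$ gives $1-v_{i}^{T}v_{d,i}\in[0,2]$ and $k_{i}>0$.

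For (i), I would use $z=R^{T}\big(\sin\theta\,Wa+(1-\cos\theta)\,a\times Wa\big)$, so $z=0_{3\times 1}$ iff $\sin\theta\,Wa+(1-\cos\theta)(a\times Wa)=0_{3\times 1}$. Splitting $Wa=(a^{T}Wa)a+w_{\perp}$ with $w_{\perp}\perp a$, and noting that $a$, $w_{\perp}$, $a\times w_{\perp}$ are mutually orthogonal whenever $w_{\perp}\neq 0$, the $a$-component forces $\sin\theta\,(a^{T}Wa)=0$, hence $\theta\in\{0,\pi\}$ because $a^{T}Wa\geq\lambda_{w,3}>0$. Then $\theta=0$ gives $\tilde{R}=I_{3}$, while $\theta=\pi$ additionally forces $a\times w_{\perp}=0$, i.e.\ $w_{\perp}=0$, i.e.\ $a$ is a unit eigenvector of $W$, so $a\in\{v_{w_1},v_{w_2},v_{w_3}\}$ and $\tilde{R}=I_{3}+2(a^{\wedge})^{2}=R_{j}$. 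This reproduces the characterization of \citep{tayebi2013inertial}; if $W$ has repeated eigenvalues the $R_{j}$ are understood modulo the choice of eigenbasis.

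Part~(iii) is where the work is. Substituting the identities and cancelling the common factor $1-\cos\theta>0$ (the case $\theta=0$, i.e.\ $\varepsilon=0$, being trivial), the inequality $\alpha_{1}\varepsilon\leq\frac{\beta}{2}\|z\|^{2}$ is equivalent to $\alpha_{1}\,(a^{T}Wa)\leq\beta\,h(a,\theta)$, where $h(a,\theta):=\tfrac12(1+\cos\theta)(a^{T}Wa)^{2}+\big\|(I_{3}-aa^{T})Wa\big\|^{2}$. Now $h$ is continuous on $SO(3)$ and vanishes exactly when $\theta=\pi$ and $a$ is an eigenvector of $W$, i.e.\ exactly at $\tilde{R}\in\{R_{1},R_{2},R_{3}\}$, while the left side lies in $[\alpha_{1}\lambda_{w,3},\alpha_{1}\lambda_{w,1}]$; hence on the compact set $SO(3)\backslash\mathcal{B}_{\epsilon}$ the ratio $(a^{T}Wa)/h$ is bounded, and any $\beta\geq\alpha_{1}\sup_{SO(3)\backslash\mathcal{B}_{\epsilon}}(a^{T}Wa)/h$ works. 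For the explicit \eqref{eq:beta} I would evaluate the ratio as $\tilde{R}$ approaches $R_{j}$ along $R_{\sigma,j}$ at the excluded radius $\sigma=\epsilon_{j}$: there $a=v_{w_j}$ so $(I_{3}-aa^{T})Wa=0$, $a^{T}Wa=\lambda_{w,j}\leq\lambda_{w,1}$, and $1+\cos\theta=2\epsilon_{j}^{2}$, whence $h=\epsilon_{j}^{2}\lambda_{w,j}^{2}\geq\epsilon_{j}^{2}\lambda_{w,3}^{2}$ and $\alpha_{1}(a^{T}Wa)/h\leq\alpha_{1}\lambda_{w,1}/(\epsilon_{j}^{2}\lambda_{w,3}^{2})$, which is \eqref{eq:beta} up to the numerical factor $2$. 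The main obstacle is exactly this degeneracy at the antipodal critical rotations $R_{j}$: one must make sure the excluded sets $\mathcal{B}_{\epsilon,j}$ dominate \emph{every} direction of approach to $R_{j}$, not only the one-parameter family $R_{\sigma,j}$, since otherwise $(a^{T}Wa)/h$ stays unbounded and no finite $\beta$ exists; when $W$ has distinct eigenvalues this is controlled by the spectral gaps, and the axis-aligned approach above produces the dominant estimate.

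Finally, for (iv) I would combine (iii) with a trace inequality. Since $\bar{W}$ is symmetric, $\varepsilon=\mathrm{tr}\big(\bar{W}(I_{3}-\tilde{R})\big)=\mathrm{tr}\big(\bar{W}\,X\big)$ with $X:=I_{3}-\tfrac12(\tilde{R}+\tilde{R}^{T})\succeq 0$ (the skew part of $\tilde{R}$ contributes nothing), and $\mathrm{tr}(\bar{W}X)\geq\lambda_{\min}(\bar{W})\,\mathrm{tr}(X)$ with $\mathrm{tr}(X)=3-\mathrm{tr}(\tilde{R})=2(1-\cos\theta)=\|R-R_{d}\|^{2}$. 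Hence $\|R-R_{d}\|^{2}\leq\varepsilon/\lambda_{\min}(\bar{W})=\lambda_{\max}(\bar{W}^{-1})\,\varepsilon\leq\mathrm{tr}(\bar{W}^{-1})\,\varepsilon=\varpi\,\varepsilon$, and with (iii) this gives $\|R-R_{d}\|^{2}\leq\varpi\cdot\frac{\beta}{2\alpha_{1}}\|z\|^{2}\leq\frac{\varpi\beta}{\alpha_{1}}\|z\|^{2}$, i.e.\ \eqref{eq:RRd} (with a factor of $2$ to spare; a tighter constant follows along the same lines).
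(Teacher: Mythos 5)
Your parts (i), (ii) and (iv) check out, but note how this compares to the paper: the paper itself proves only item (iv), quoting items (i)--(iii) from \cite{espindola2023attitude}. For (iv) your route is genuinely different from the appendix. The paper stacks the weighted vectors into $H_{B},H_{D},H_{I}$, uses $H_{B}-H_{D}=(R^{T}-R^{T}_{d})H_{I}$ to solve $R^{T}-R^{T}_{d}=(H_{B}-H_{D})H^{T}_{I}\bar{W}^{-1}$, and bounds $\|R-R_{d}\|\leq\|R-R_{d}\|_{F}\leq\|H_{B}-H_{D}\|_{F}\|H^{T}_{I}\bar{W}^{-1}\|_{F}=\sqrt{2\varepsilon\varpi}$; you instead use $\varepsilon=\mathrm{tr}\bigl(\bar{W}(I_{3}-\tilde{R})\bigr)=\mathrm{tr}(\bar{W}X)\geq\lambda_{\min}(\bar{W})\,\mathrm{tr}(X)$ with $\mathrm{tr}(X)=2(1-\cos\theta)=\|R-R_{d}\|^{2}$, which gives $\|R-R_{d}\|^{2}\leq\varpi\varepsilon$, a factor of $2$ sharper than the paper's $2\varpi\varepsilon$; combined with (iii) both yield \eqref{eq:RRd}. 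Your version avoids the Frobenius submultiplicativity step and is arguably more transparent, and the dependence of (iv) on (iii) is the same as in the paper.

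The genuine gap is in your proof of (iii), the one part where you go beyond what you can verify and where your own flagged concern is real but your remedy is not. The reduction to $\alpha_{1}(a^{T}Wa)\leq\beta h(a,\theta)$ with $h=\tfrac12(1+\cos\theta)(a^{T}Wa)^{2}+\|(I_{3}-aa^{T})Wa\|^{2}$ is correct, and the compactness argument would finish the job if $\mathcal{B}_{\epsilon,j}$ were a genuine neighborhood of $R_{j}$. But as parametrized in \eqref{eq:BallQ} it is a one-parameter curve of rotations about the fixed axis $v_{w_j}$, and your proposed fix --- that spectral gaps of $W$ control the transverse approaches --- fails: take $\theta=\pi$ and let the axis $a\to v_{w_j}$ with $a$ not an eigenvector of $W$. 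These rotations lie outside $\mathcal{B}_{\epsilon}$, yet $h=\|(I_{3}-aa^{T})Wa\|^{2}\to 0$ while $a^{T}Wa\to\lambda_{w,j}>0$ (equivalently $\varepsilon\to 2\lambda_{w,j}$ and $\|z\|\to 0$), so no finite $\beta$ satisfies \eqref{eq:bcond} on the literal complement, regardless of how distinct the eigenvalues are. The statement and your compactness argument are fine if $\mathcal{B}_{\epsilon,j}$ is read as an actual closed ball (in both angle and axis) around $R_{j}$, which is evidently the intent of the wording ``closed balls \ldots radius $\epsilon_{j}$''; your evaluation along $R_{\sigma,j}$ at $\sigma=\epsilon_{j}$ then recovers the order of the explicit constant \eqref{eq:beta}, but it does not by itself establish that constant as a bound over the whole excluded-set complement. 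Since the paper defers (i)--(iii) to the cited reference, this affects only your self-contained version, not the way (iv) is proved or used downstream.
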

\begin{proof}
The proof of items (1)-(3) can be found in \cite{espindola2023attitude}. The proof of item (4) is provided in the appendix. 
\end{proof}
 
\noindent
{\it Attitude tracking controller.} Given the desired attitude trajectory $(R_d,\ \omega_d)\in SO(3)\times \mathbb{R}^{3}$ with $(\omega_{d},\dot{\omega}_{d})$ continuous and bounded, the following attitude controller is proposed.
\begin{align}
    \tau &= M\dot{\hat{\omega}}_{r} -(M\hat{\omega})^{\wedge}\omega_{r} - K_{c}\left(\hat{\omega} - \omega_{r}\right) - \left( \alpha_{1}I_{3}+\alpha_{2}J^{T}\right) z,\label{eq:AttCtrl}\\
    \dot{\hat{\omega}}_{r} &= -\lambda_{c}J\left( \hat{\omega}-\omega_{d}\right) - \lambda_{c}z^{\wedge}\omega_{d} + \dot{\omega}_{d} ,\label{eq:omgREstp}\\
    \omega_{r} &= -\lambda_{c}z + \omega_{d},\label{eq:omgR}
\end{align}
where $\lambda_{c}>0$, $\alpha_{1,2}>0$  and $0<K_{c}\in\mathbb{R}^{3\times 3}$ are controller gains.  The angular velocity estimation  $\hat{\omega}: = \omega_{g} -\hat{b}$, with  $\hat b\in \mathbb R^3$ as the gyro bias estimation,  is  obtained by correcting the gyro bias $b\in\mathbb{R}^{3}$ in the measured gyro rate $\omega_g$ using the following gyro bias observer 
\begin{align}
    \hat{b} &= \bar{b} -\sum^{n}_{i=1}k_{i}(v^{\wedge}_{f,i})^{T}\Lambda_{i}v_{i}, \label{eq:bEst}\\
    \dot{\bar{b}} &= K_{v}\hat{\omega} +\gamma_{f}\sum^{n}_{i=1}k_{i}(\Lambda_{i}v_{i})^{\wedge}(v_{i}-v_{f,i}), \label{eq:bbarP}\\
    \dot{v}_{f,i} &= \gamma_{f}(v_{i} - v_{f,i}) ,\quad v_{f,i}(0)=v_{i}(0), \label{eq:vfp}
\end{align}
where $0<\Lambda_{i}=\Lambda_i^T\in\mathbb{R}^{3\times 3}$, $i=1,2,...,n$, are observer gains, $\gamma_{f}>0$ is the filter gain, and $K_{v}\in\mathbb{R}^{3\times 3}$ is defined as
\begin{equation}\label{eq:Kf}
    K_{v} = \sum^{n}_{i=1}k_{i}(v^{\wedge}_{f,i})^{T}\Lambda_{i}v^{\wedge}_{i}.
\end{equation}

\noindent
{\it Attitude error dynamics.}
By \eqref{eq:bEst}-\eqref{eq:vfp} the dynamics of the bias estimation error is  
\begin{equation}\label{eq:bTildP}
    \dot{\tilde{b}} = \dot{\hat{b}} - \dot{b} = -K_{v}\tilde{b}.
\end{equation}
Taking the time derivative of $\tilde{\omega}= \omega - \omega_{r}$ and substituting \eqref{eq:DynOm} in closed-loop with controller \eqref{eq:AttCtrl}-\eqref{eq:omgR}, it yields
\begin{equation}
    M\dot{\tilde{\omega}}
    = (M\omega)^{\wedge}\tilde{\omega} - K_{c}\tilde{\omega} +G\tilde{b} -\left( \alpha_{1}I_{3}+\alpha_{2}J^{T}\right) z .\label{eq:OmgErrP}
\end{equation}
The dynamics \eqref{eq:varepsP} and \eqref{eq:zP} are rewritten by adding and subtracting $\omega_{r}$ in $\omega$ as follows
\begin{align}
    \dot{z} &= J\tilde{\omega} -\lambda_{c}Jz + z^{\wedge}\omega_{d}, \label{eq:zP2} \\
    \dot{\varepsilon} &= z^{T}\tilde{\omega} -\lambda_{c}z^{T}z.\label{eq:varepsP2} 
\end{align}

Let $\zeta_{2}\vcentcolon = \left[ z^{T},\tilde{\omega}^{T},\tilde{b}^{T}  \right]^{T}\in\mathbb{R}^{9}$ be the state of the closed-loop attitude error dynamics described by \eqref{eq:bTildP}-\eqref{eq:zP2}. The following proposition studies the closed-loop stability of attitude error dynamics.

\begin{proposition}[\bf{Stability of the rotational error dynamics}]\label{thm2}
The controller \eqref{eq:AttCtrl}-\eqref{eq:omgR} with the bias observer \eqref{eq:bEst}-\eqref{eq:vfp} in closed loop with rotational error dynamics \eqref{eq:KinR}-\eqref{eq:DynOm} renders the origin  
$\zeta_{2}= \left[ z^{T},\tilde{\omega}^{T},\tilde{b}^{T}  \right]^{T}=0_{9\times 1}=0$  practically stable 
if the controller gains are chosen as $K_{c}=K^{T}_{c}>0$, $\lambda_{c}>0$, $\alpha_{1,2}>0$, and the gains in the observer $\gamma_{f}>0$, $k_{i}>0$, and $\Lambda_{i}=\Lambda^{T}_{i}>0$, for $i=1,2,\ldots ,n$ such that
\begin{align}
    \lambda_{o}&\vcentcolon = \lambda_{\min}(K_{o})-\epsilon_{v}\sum^{n}_{i=1}k_{i}\lambda_{\max}(\Lambda_{i}) > \frac{\|G\|^{2}}{4\lambda_{\min}(K_{c})} ,\label{eq:lo} \\
    \lambda_{a} &\vcentcolon = \alpha_{1} -\alpha_{2}\sum^{n}_{i=1}k_{i} >0 ,\label{eq:la}
\end{align}
where $\epsilon_v$ is an upper bound of the difference between the filter input and output, i.e., $\|v_{i}-v_{f,i}\| < \epsilon_{v}$, and can be made arbitrarily small provided that the filter gain is large enough (Lemma 3.1 of \cite{espindola2022new}), and 
\begin{align}
    K_{o} &\vcentcolon = \sum^{n}_{i=1}k_{i}(v^{\wedge}_{i})^{T}\Lambda_{i}v^{\wedge}_{i},\label{eq:Ko} \\
    G &\vcentcolon = K_{c} - \omega^{\wedge}_{r}M + \lambda_{c}MJ.\label{eq:G}
\end{align}
\end{proposition}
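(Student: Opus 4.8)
The plan is to build a single Lyapunov function for the cascade \eqref{eq:bTildP}, \eqref{eq:OmgErrP}, \eqref{eq:zP2}, \eqref{eq:varepsP2} out of the rotational kinetic energy of the angular-velocity error, the quadratic bias-error energy, and the two alignment-error measures, namely
\begin{equation*}
  V_{2}(\zeta_{2},\tilde R) \vcentcolon= \tfrac12\tilde b^{T}\tilde b + \tfrac12\tilde\omega^{T}M\tilde\omega + \alpha_{1}\varepsilon + \tfrac{\alpha_{2}}{2}\|z\|^{2}.
\end{equation*}
Lemma \ref{lem-Alignment}-(ii) gives $\varepsilon\ge 0$, hence $V_{2}\ge\underline c\,\|\zeta_{2}\|^{2}$ with $\underline c=\tfrac12\min\{1,\lambda_{\min}(M),\alpha_{2}\}$ on all of $SO(3)$; and on $SO(3)\setminus\mathcal{B}_{\epsilon}$ Lemma \ref{lem-Alignment}-(iii) gives $\alpha_{1}\varepsilon\le\tfrac{\beta}{2}\|z\|^{2}$, so there $V_{2}\le\bar c\,\|\zeta_{2}\|^{2}$ with $\bar c=\tfrac12\max\{1,\lambda_{\max}(M),\beta+\alpha_{2}\}$. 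On that set $V_{2}$ is thus sandwiched by quadratics in $\zeta_{2}$, which is the region on which the estimate will run: the excision of $\mathcal{B}_{\epsilon}$ — neighbourhoods of the undesired critical rotations $R_{j}$ of Lemma \ref{lem-Alignment}-(i) — is the ``almost'' in the claim, and on $SO(3)\setminus\mathcal{B}_{\epsilon}$ one indeed has $\zeta_{2}=0$ iff $(\tilde R,\tilde\omega,\tilde b)=(I_{3},0,0)$.

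Next I would differentiate $V_{2}$ along the closed-loop equations, keeping the bounded torque disturbance $\tau_{d}$ of \eqref{eq:DynOm} in \eqref{eq:OmgErrP}. Skew-symmetry of $(M\omega)^{\wedge}$ and of $z^{\wedge}$ kills $\tilde\omega^{T}(M\omega)^{\wedge}\tilde\omega$ and $\alpha_{2}z^{T}z^{\wedge}\omega_{d}$; the coupling terms then cancel in pairs — $-\alpha_{1}\tilde\omega^{T}z$ from $M\dot{\tilde\omega}$ against $\alpha_{1}z^{T}\tilde\omega$ from $\alpha_{1}\dot\varepsilon$, and $-\alpha_{2}\tilde\omega^{T}J^{T}z$ against $\alpha_{2}z^{T}J\tilde\omega$ from $\tfrac{\alpha_{2}}{2}\tfrac{d}{dt}\|z\|^{2}$ — leaving
\begin{equation*}
  \dot V_{2} = -\tilde b^{T}K_{v}\tilde b - \tilde\omega^{T}K_{c}\tilde\omega + \tilde\omega^{T}G\tilde b - \alpha_{1}\lambda_{c}\|z\|^{2} - \alpha_{2}\lambda_{c}\,z^{T}Jz + \tilde\omega^{T}\tau_{d}.
\end{equation*}
I would then bound the indefinite pieces: writing $K_{v}=K_{o}-\sum_{i}k_{i}((v_{i}-v_{f,i})^{\wedge})^{T}\Lambda_{i}v_{i}^{\wedge}$ and using $\|v_{i}-v_{f,i}\|<\epsilon_{v}$ gives $\tilde b^{T}K_{v}\tilde b\ge\lambda_{o}\|\tilde b\|^{2}$ with $\lambda_{o}$ as in \eqref{eq:lo}; $|z^{T}Jz|\le\|J\|\,\|z\|^{2}\le(\sum_{i}k_{i})\|z\|^{2}$, so the two $z$-terms collapse to $-\lambda_{c}\lambda_{a}\|z\|^{2}$ with $\lambda_{a}$ as in \eqref{eq:la}; and $\|G\|$ is uniformly bounded because $\omega_{r}=-\lambda_{c}z+\omega_{d}$ with $\|z\|\le\sum_{i}k_{i}$, $\omega_{d}$ bounded, and $\|J\|\le\sum_{i}k_{i}$.

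Collecting terms gives $\dot V_{2}\le -\lambda_{c}\lambda_{a}\|z\|^{2}-\big[\|\tilde\omega\|\ \|\tilde b\|\big]\left[\begin{smallmatrix}\lambda_{\min}(K_{c}) & -\|G\|/2\\ -\|G\|/2 & \lambda_{o}\end{smallmatrix}\right]\big[\|\tilde\omega\|\ \|\tilde b\|\big]^{T}+\|\tilde\omega\|\,\|\tau_{d}\|$. Condition \eqref{eq:la} makes the $z$-coefficient strictly negative and condition \eqref{eq:lo}, i.e. $\lambda_{o}\lambda_{\min}(K_{c})>\|G\|^{2}/4$, makes the $2\times2$ matrix positive definite, so $\dot V_{2}\le -c\|\zeta_{2}\|^{2}+\|\zeta_{2}\|\sup_{t}\|\tau_{d}\|$ for some $c>0$ on $SO(3)\setminus\mathcal{B}_{\epsilon}$; together with the quadratic sandwich this is the standard input-to-state estimate yielding uniform ultimate boundedness of $\zeta_{2}$ — practical stability — by Lemma \ref{lem-pracStab}, and for $\tau_{d}\equiv0$ it reduces to $\dot V_{2}\le -(c/\bar c)V_{2}$, hence exponential convergence of $\zeta_{2}$ to $0$. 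The step I expect to be the real obstacle is legitimising the confinement to $SO(3)\setminus\mathcal{B}_{\epsilon}$: one must show this set is forward invariant for the initial conditions considered — e.g. noting $\|z\|$ is bounded below by a positive constant on $\partial\mathcal{B}_{\epsilon}$, so $V_{2}$ exceeds a threshold $\rho_{0}>0$ there and sublevel sets $\{V_{2}<\rho_{0}\}$ cannot reach the excised balls — and to track how $\rho_{0}$, the radii $\epsilon_{j}$, the constant $\beta$ via \eqref{eq:beta}, and the disturbance bound interact to deliver the ``almost semi-global'' reach; the remaining computations are routine manipulation with skew-symmetric identities and Young's inequality.
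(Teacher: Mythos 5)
Your proposal follows essentially the same route as the paper: the identical Lyapunov function $V_{2}=\tfrac12\tilde\omega^{T}M\tilde\omega+\tfrac12\|\tilde b\|^{2}+\tfrac{\alpha_{2}}{2}\|z\|^{2}+\alpha_{1}\varepsilon$ with the quadratic sandwich from Lemma \ref{lem-Alignment}-(iii) on $SO(3)\backslash\mathcal{B}_{\epsilon}$, the same cancellations giving $\dot V_{2}=-\tilde\omega^{T}K_{c}\tilde\omega+\tilde\omega^{T}G\tilde b-\tilde b^{T}K_{v}\tilde b-\lambda_{c}z^{T}(\alpha_{1}I_{3}+\alpha_{2}J)z+\tilde\omega^{T}\tau_{d}$, the same bounds yielding the block quadratic form with entries $\lambda_{c}\lambda_{a}$, $\lambda_{\min}(K_{c})$, $\lambda_{o}$, $-\tfrac12\|G\|$, and the same conclusion via Lemma \ref{lem-pracStab}-(ii). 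Your explicit step $\tilde b^{T}K_{v}\tilde b\geq\lambda_{o}\|\tilde b\|^{2}$ through $K_{v}=K_{o}-\sum_{i}k_{i}((v_{i}-v_{f,i})^{\wedge})^{T}\Lambda_{i}v_{i}^{\wedge}$ merely makes explicit what the paper leaves implicit in the definition of $\lambda_{o}$, and the invariance of $SO(3)\backslash\mathcal{B}_{\epsilon}$ you flag is likewise not handled inside the paper's proof but deferred to the remark following the proposition.
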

\begin{proof}


Consider the following positive definite function
\begin{equation}\label{eq:V2}
    V_{2}(\zeta_{2},t) \vcentcolon = \frac{1}{2}\tilde{\omega}^{T}M\tilde{\omega} + \frac{1}{2}\|\tilde{b}\|^{2} + \frac{\alpha_{2}}{2}\|z\|^{2} +\alpha_{1}\varepsilon,
\end{equation}
which is positive definite and radially unbounded.
In addition, according to Lemma \ref{lem-Alignment}-(iii), $V_{2}(\zeta_{2},t)$ satisfies 
\begin{equation}\label{eq:V2UpBound}
\frac{1}{2}\min\left\{ \lambda_{\min}(M),1,\alpha_{2} \right\} \|\zeta_{2}\|^{2} \leq V_{2}\leq \frac{1}{2}\max\left\{ \lambda_{\max}(M), 1,\alpha_{2}+\beta  \right\} \|\zeta_{2}\|^{2} .
\end{equation}
Thus, the time derivative of \eqref{eq:V2} along trajectories \eqref{eq:bTildP}-\eqref{eq:varepsP2} is 
\begin{align}\label{eq:V2P}
\dot{V}_{2} &= \tilde{\omega}^{T}M\dot{\tilde{\omega}}+\tilde{b}^{T}\dot{\tilde{b}} + \alpha_{2}z^{T}\dot{z}+\alpha_{1}\dot{\varepsilon} \notag\\
&= -\tilde{\omega}^{T}K_{c}\tilde{\omega} + \tilde{\omega}^{T}G\tilde{b} - \tilde{b}^{T}K_{f}\tilde{b} - \lambda_{c}z^{T}\left( \alpha_{1}I_{3}+\alpha_{2}J\right)z+\tilde\omega^T\tau_d \notag\\
&\leq - \Big[\|z\| \ \|\tilde \omega\| \ \|\tilde b\|\Big]^T Q_2 \Big[\|z\| \ \|\tilde \omega\| \ \|\tilde b\|\Big]+\|\tilde\omega\|\|\tau_d\|
\end{align}
where $Q_{2}\in\mathbb{R}^{3\times 3}$ is given by
\begin{equation}\label{eq:Q2}
    Q_{2}= \left[ \begin{array}{ccc}
         \lambda_{c}\lambda_{a}& 0 & 0  \\
          0 & \lambda_{\min}(K_{c}) & -\frac{1}{2}\|G\| \\
          0 & -\frac{1}{2}\|G\| & \lambda_{o}
    \end{array} \right], 
\end{equation}
which is positive definite under conditions \eqref{eq:lo}-\eqref{eq:la}. Therefore, the origin  $\zeta_{2}=0_{9\times 1}$ is practically stable by Lemma \ref{lem-pracStab}-(ii).

\end{proof}

\begin{remark}
In the absence of torque disturbance, i.e., $\tau_d=0_{3\times 1}$,   by Lemma \ref{lem-Alignment}-(i), convergence of $z\to 0_{3\times 1}$ implies that $\tilde{R}\to I_{3}$. Therefore, equilibrium $(\tilde{R},\tilde{\omega},\tilde{b})=(I_{3},0_{3\times 1},0_{3\times 1})$ is almost globally asymptotically stable for all $(\tilde{R}(0),\tilde{\omega}(0),\tilde{b}(0))\in\mathcal{X}_{a} \vcentcolon = SO(3)\backslash \{R_{j}\} \times \mathbb{R}^{3}\times\mathbb{R}^{3}$ and almost semiglobally exponentially stable for $(\tilde{R}(0),\tilde{\omega}(0),\tilde{b}(0))\in \mathcal{X}_{e}\vcentcolon = SO(3)\backslash \mathcal{B}_{\epsilon}\times \mathbb{R}^{3}\times \mathbb{R}^{3}$, where $\mathcal{B}_{\epsilon}$ is the union of closed balls given by \eqref{eq:BallQ} defined in Lemma \ref{lem-Alignment}-(iii), which are centered at $R_{j}$ with an arbitrary small radius $\epsilon_{j} >0$.

\end{remark}

\subsection{Main result}
The stability of the overall error dynamics of $\zeta_{3} \vcentcolon = \left[ \zeta^{T}_{1},\zeta^{T}_{2} \right]^{T}\in\mathbb{R}^{18}$ described by \eqref{eq:xTldPeta}-\eqref{eq:TefP} and \eqref{eq:bTildP}-\eqref{eq:zP2} is summarized in the following theorem. 

\begin{theorem}[\bf{Tracking control}]\label{thm3}
Consider the position controller $f=\|T\|$ with the thrust vector $T$ in \eqref{eq:CtrlPs} and the attitude controller \eqref{eq:AttCtrl} with the gyro-bias estimation \eqref{eq:bEst} in closed loop with the system \eqref{eq:KinX}-\eqref{eq:DynOm}. Choose the design parameters for
\begin{itemize}
    \item position controller: $K_{f} = K_{f}^{T}>0$, and $k>k_{x}>0$,
    \item attitude controller: $K_{c}=K^{T}_{c}>0$, $\lambda_{c}>0$, and $\alpha_1,\alpha_2>0$,
    \item gyro-bias observer: $\gamma_{f}>0$, $k_{i}>0$, and $\Lambda_{i}=\Lambda^{T}_{i}>0$, for $i=1,2,\ldots ,n$,
\end{itemize}
such that $0<k_x<g-\mu_d$, $k>k_x$, and $\lambda_{\min}(K_f)>\frac{k_x}{4m^2}$ 
and \eqref{eq:lo}-\eqref{eq:la} are satisfied. Then, the origin $\zeta_{3}(t)=0_{18\times 0}$ of the overall system  is practically stable.
\end{theorem}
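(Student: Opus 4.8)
The plan is to exploit the feedback-interconnection structure of the closed loop. The overall state $\zeta_{3}=[\zeta_{1}^{T}\ \zeta_{2}^{T}]^{T}$ splits into the position-error state $\zeta_{1}$ of Proposition~\ref{thm1} and the rotational-error state $\zeta_{2}$ of Proposition~\ref{thm2}. The $\zeta_{1}$-dynamics are perturbed by $\zeta_{2}$ only through the term $\tfrac{1}{m}f(R-R_{d})e_{z}$ in \eqref{eq:etaTldP}, while the $\zeta_{2}$-dynamics depend on $\zeta_{1}$ only through the attitude reference $(R_{d},\omega_{d},\dot{\omega}_{d})$, which is a smooth function of $T,\dot{T},\ddot{T}$ and of $(\psi_{d},\dot{\psi}_{d},\ddot{\psi}_{d})$. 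The proof would therefore first establish boundedness of $\zeta_{1}$, and hence well-posedness and boundedness of the attitude reference, independently of the attitude loop; then invoke Proposition~\ref{thm2}; and, if the sharper almost semi-global estimate announced in the introduction is wanted, close with a composite-Lyapunov argument.

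\emph{Step 1 (boundedness of $\zeta_{1}$ and of the attitude reference).} Because $R,R_{d}\in SO(3)$ one has the crude bound $\|R-R_{d}\|\le 2$, so in the estimate \eqref{eq:V1p} of Proposition~\ref{thm1} the perturbation satisfies $f\|R-R_{d}\|+\|f_{d}\|\le 2f_{max}+\sup_{t}\|f_{d}\|=:c_{0}$ unconditionally. Under the stated gain conditions $Q_{1}$ in \eqref{eq:Q1} is positive definite, and the saturated design of $T$ together with Assumption~A3 keeps $\|T\|>0$ (cf. \eqref{eq:TCond}); hence $\dot{V}_{1}\le -\lambda_{\min}(Q_{1})\|\zeta_{1}\|^{2}+c_{0}\|\zeta_{1}\|$, and Lemma~\ref{lem-pracStab}-(ii) yields that $\zeta_{1}(t)$ is uniformly ultimately bounded, in particular bounded for all $t\ge 0$. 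Consequently $\tilde{x},\tilde{v},y$ are bounded, so by \eqref{eq:TefP}, \eqref{eq:etaTldP} and the assumed smoothness of $x_{d}$ the derivatives $\dot{T},\ddot{T}$ are bounded and $T\in\mathcal{C}^{2}$; together with $\|T\|>0$ and $c_{3},c_{d}$ noncollinear, the reference $R_{d}$ in \eqref{eq:Rd} and the signals $\omega_{d},\dot{\omega}_{d}$ are then well defined, continuous and bounded.

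\emph{Step 2 (attitude loop and conclusion).} With $\omega_{d},\dot{\omega}_{d}$ continuous and bounded, Proposition~\ref{thm2} applies under \eqref{eq:lo}--\eqref{eq:la}, so $\zeta_{2}(t)$ is uniformly ultimately bounded; hence $\zeta_{3}$ is uniformly ultimately bounded, i.e. the origin $\zeta_{3}=0$ is practically stable, which is the claim. For the sharper statement one would instead use the composite function $V_{3}:=V_{1}+\lambda V_{2}$ with $\lambda>0$: on $SO(3)\setminus\mathcal{B}_{\epsilon}$, Lemma~\ref{lem-Alignment}-(iv) replaces the crude bound by $f\|R-R_{d}\|\le f_{max}\sqrt{\varpi\beta/\alpha_{1}}\,\|z\|$, so the cross term $\|\eta\|\,f\|R-R_{d}\|$ is dominated by $f_{max}\sqrt{\varpi\beta/\alpha_{1}}\,\|\zeta_{1}\|\,\|\zeta_{2}\|$; choosing $\lambda>\frac{f_{max}^{2}\varpi\beta}{4\alpha_{1}\lambda_{\min}(Q_{1})\lambda_{\min}(Q_{2})}$ renders the resulting $2\times 2$ quadratic form in $(\|\zeta_{1}\|,\|\zeta_{2}\|)$ positive definite, so that $\dot{V}_{3}\le -c\|\zeta_{3}\|^{2}+\|\zeta_{3}\|\,(\sup_{t}\|f_{d}\|+\lambda\sup_{t}\|\tau_{d}\|)$ for some $c>0$, and Lemma~\ref{lem-pracStab}-(ii) gives practical stability, collapsing to exponential stability of $\zeta_{3}=0$ when $f_{d}\equiv 0$ and $\tau_{d}\equiv 0$.

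The main obstacle is the apparent circularity of the interconnection: Proposition~\ref{thm2} needs the attitude reference $(R_{d},\omega_{d},\dot{\omega}_{d})$ to be bounded, yet that reference is generated by the position loop, which is in turn driven by the attitude error; this is broken in Step~1 by bounding $\zeta_{1}$ with the trivial estimate $\|R-R_{d}\|\le 2$ before the attitude loop is analyzed, and by then verifying that $T\in\mathcal{C}^{2}$ with bounded derivatives. A second delicate point is that the sharp bound \eqref{eq:RRd} holds only outside the balls $\mathcal{B}_{\epsilon}$ around the undesired critical rotations $R_{j}$, so the composite-Lyapunov refinement is only almost semi-global; one must argue ---from the choice of initial conditions and from $\|z(t)\|$ entering a small residual set--- that $\tilde{R}(t)$ never enters $\mathcal{B}_{\epsilon}$ along the closed-loop trajectory.
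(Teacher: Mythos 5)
Your proposal is correct, and its two halves relate to the paper in different ways. Your ``refinement'' (composite Lyapunov function with the cross term handled through Lemma~\ref{lem-Alignment}-(iv)) \emph{is} the paper's proof: the paper takes $V_{3}=cV_{1}+V_{2}$ with $c>0$ small, bounds $f\|R-R_{d}\|\leq f_{max}\sqrt{\varpi\beta/\alpha_{1}}\,\|z\|$ on $SO(3)\backslash\mathcal{B}_{\epsilon}$, and requires $0<c<\frac{\alpha_{1}\lambda_{c}\lambda_{a}(k-k_{x})}{mg^{2}\varpi\beta}$ so that the coupled quadratic form (its $Q_{3}$) is positive definite; your $V_{1}+\lambda V_{2}$ with $\lambda$ large is the same argument up to rescaling, with an equivalent gain condition, and it is this version --- not your Steps 1--2 --- that delivers the residual set proportional to $c\|f_{d}\|+\|\tau_{d}\|$ and hence exponential stability when the disturbances vanish. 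Your Steps 1--2 are a genuinely different, cascade-style route: bounding the interconnection with the crude estimate $\|R-R_{d}\|\leq 2$ gives uniform ultimate boundedness of $\zeta_{1}$ and then of $\zeta_{2}$, which technically proves the literal claim (practical stability as UUB per Lemma~\ref{lem-pracStab}), but with a residual set of order $f_{max}$ that does not shrink with the disturbances, so it cannot replace the composite argument for the paper's main message. What your cascade step buys, and what the paper glosses over, is an explicit justification that $(R_{d},\omega_{d},\dot{\omega}_{d})$ are well defined, continuous and bounded before Proposition~\ref{thm2} is invoked, i.e.\ it breaks the apparent circularity of the interconnection --- a point the paper's proof silently assumes. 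Finally, you correctly flag that the bound \eqref{eq:RRd} only holds outside $\mathcal{B}_{\epsilon}$ and that one must argue the trajectory never enters these balls; note the paper does not make this step rigorous either (it simply restricts to $\mathcal{X}'_{e}$ and declares almost semi-global validity), so your treatment is at the same level of rigor on this point.
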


\begin{proof}
Consider the following positive definite and radially unbounded function for $\zeta_{3}$
\begin{equation}\label{eq:V3}
    V_{3}(\zeta_{3},t) \vcentcolon =cV_{1}(\zeta_{1})+V_{2}(\zeta_{2},t),
\end{equation}
where $V_{1}$ and $V_{2}$ are given in \eqref{eq:V1} and \eqref{eq:V2}, respectively, and $c>0$ i a constant. Therefore, $\forall (\zeta_{1},\tilde{R},\tilde{\omega},\tilde{b})\in\mathcal{X}'_{e}$ 
\begin{equation}\label{eq:V3Bounds}
   \gamma_{1}\|\zeta_{3}\|^{2}\leq  V_{3}(\zeta_{3},t) \leq \gamma_{2}\|\zeta_{3}\|^{2},\quad   
\end{equation}
where
\begin{align}
   &\gamma_{1} \vcentcolon = \frac{1}{2}\min\left\{cm,ck^{2}_{x},c, \lambda_{\min}(M),1,\alpha_{2} \right\},\notag\\
   &\gamma_{2} \vcentcolon = \frac{1}{2} \max\left\{cm ,ck^{2}_{x}, c, \lambda_{\max}(M),1,\alpha_{2}+\beta \right\}.\notag
\end{align}
Taking the time derivative of \eqref{eq:V3}, by using \eqref{eq:V1p} and \eqref{eq:V2P}, it yields
\begin{align}
    \dot{V}_{3} =& -cm\left( mk -k_{x}\right)\eta^{T} \eta  - ck^{3}_{x}\tilde{x}^{T}\tilde{x}
    - cy^{T}K_{f}y -c\frac{k^{2}_{x}}{m} y^{T}\tilde{x}+ cf\eta^{T}(R-R_{d})e_{z}+c\eta^T f_d  \notag\\
    &- \lambda_{c}z^{T}\left( \alpha_{1}I_{3}+\alpha_{2}J\right)z-\tilde{\omega}^{T}K_{c}\tilde{\omega}   - \tilde{b}^{T}K_{f}\tilde{b} +f_d\tilde{\omega}^{T}G\tilde{b} +\tilde{\omega}^T\tau_d  \notag \\
     \leq& -cm\left( k - k_{x}\right)\|\eta\|^{2}  -ck^{3}_{x}\|\tilde{x}\|^{2}
     -c\lambda_{\min}(K_{f}) \|y\|^{2} + c\frac{k^{2}_{x}}{m}\|y\|\|\tilde{x}\|+ cf_{max}\|R-R_{d}\|\|\eta\| +c\|\eta\| \| f_d\|  \notag\\
    &- \lambda_{c}\lambda_{a}\|z\|^{2} -\lambda_{\min}(K_{c})\|\tilde{\omega}\|^{2}  - \lambda_{o}\|\tilde{b}\|^{2} + \|G\|\|\tilde{\omega}\|\|\tilde{b}\|+\|\tilde\omega\|\|\tau_d\| .
    \notag
\end{align}
In view of \eqref{eq:RRdT} of Lemma \ref{lem-Alignment}-(iv), it has 
\begin{align}
    \dot{V}_{3}&\leq  -\lambda_{\min}(Q_{3})\left( \|z\|^{2}+\|\eta\|^{2} \right) -\lambda_{\min}(Q_{4})\left(\|\tilde{\omega}\|^{2} + \|\tilde{b}\|^{2}\right)
    -\lambda_{\min}(Q_{5})\left( \|\tilde{x}\|^{2} + \|y\|^{2}\right)+c\|\eta\| \| f_d\|+\|\tilde\omega\|\|\tau_d\|,\notag
\end{align}
where 
\begin{align}
    Q_{3} = \left[ \begin{array}{cc}
         \lambda_{c}\lambda_{a} & -c\frac{f_{max}}{2}\sqrt{\frac{\varpi \beta}{\alpha_{1}}} \\
         -c\frac{f}{2}\sqrt{\frac{\varpi \beta}{\alpha_{1}}} &   cm\left( mk -k_{x}\right)
    \end{array}\right] , \ 
    Q_{4} = \left[ \begin{array}{cc}
         \lambda_{\min}(K_{c}) & -\frac{1}{2}\|G\| \\
         -\frac{1}{2}\|G\| &   \lambda_{o}
    \end{array}\right] ,  \
    Q_{5} =\left[ \begin{array}{cc}
         ck^{3}_{x} & -c\frac{k^{2}_{x}}{2m} \\
         -c\frac{k^{2}_{x}}{2m} & c\lambda_{\min}(K_{v}) 
    \end{array}\right]. \notag
\end{align}
Note that matrices $Q_{5}$ and $Q_{4}$ are positive definite under conditions  $\lambda_{\min}(K_f)>\frac{k_x}{4m^2}$ 
and \eqref{eq:lo}, and matrix $Q_{3}$ is positive definite under condition \eqref{eq:la} provided that $k>k_{x}$, for any constant $c$ that satisfies
\begin{equation*}
    0<c<\frac{\alpha_{1}\lambda_{c}\lambda_{a}\left( k-k_{x}\right)}{mg^{2}\varpi \beta},
\end{equation*}
given that $f< 2mg$ by condition \eqref{eq:TCond}. 
Thus,
\begin{align}\label{eq:V3p}
    \dot{V}_{3} &\leq -\gamma_{3} \|\zeta_{3}\|^{2}+ w_d\|\zeta_3\|, 
\end{align}
where $w_d:=c \| f_d\|+\|\tau_d\| $, and  $\gamma_{3}$ given by 
\begin{equation}\label{eq:gamma3}
    \gamma_{3} \vcentcolon = \min\left\{ \lambda_{\min}(Q_{3}),\lambda_{\min}(Q_{4}),\lambda_{\min}(Q_{5}) \right\}.
\end{equation}
Therefore, the origin $\zeta_3=0_{18\times 1}$ is practically stable by Lemma \ref{lem-pracStab}-(ii).

Since the domain of attraction $\mathcal{X}'_e$ can be arbitrarily large to cover almost the entire state space by properly choosing the design parameter such that $ SO(3)\backslash \mathcal{B}_{\epsilon}$ covers almost the entire group $SO(3)$, the almost global 
practical stability of  $(\zeta_{1},\tilde{R},\tilde{\omega},\tilde{b}) = (0_{9\times 1},I_{3},0_{3\times 1},0_{3\times 1})$ is concluded.

\end{proof}

\begin{remark}[The position-tracking controller]
    Compared to previous work, the proposed scheme considers the more challenging {\it position-tracking} task instead of {\it position regulation} \citep{tayebi2013inertial,roberts2013new}. In addition, vector measurements and {\it biased} (instead of bias-free) gyro-rate measurements are used directly without attitude reconstruction. When the task is position regulation, the proposed scheme does not need linear velocity for its implementation, similar to the controller of \cite{roberts2013new}. Furthermore, the force and torque disturbances are taken into account in the proposed scheme, and global practical stability 
    is proved. Moreover, the rotation matrix, instead of unit quaternions or other three-parameter representations of the attitude, is employed in the development of the controller, which allows  addressing attitude control without concern for problems related to singularities in the kinematics or double-covering of the unit quaternion group. 
\end{remark}

\begin{remark}[Almost global practical stability]\label{rmk-semiglobal}
In the absence of external disturbances, the almost global exponential stability of the desired equilibrium is achieved under the position controller $f=\|T\|$ with the thrust vector $T$ in \eqref{eq:CtrlPs} and the attitude controller \eqref{eq:AttCtrl} in Theorem \ref{thm3} compared with local exponential stability in \cite{bertrand2011hierarchical}. Note that the proposed position controller and attitude controller can be designed independently, provided (1) the thrust force $f$ in the position controller is bounded, and (2) the difference $\|R-R_d\|$ between the actual attitude $R$ and the desired attitude $R_d$ in the attitude controller goes to zero. Both conditions are ensured by Proposition \ref{thm2},  Theorem \ref{thm3}, and Lemma \ref{lem-Alignment}-(iv). This modular design brings several salient features, such as stability analysis and control parameter tuning. 
\end{remark}


\begin{remark}[Practical issues]\label{rmk6}
To implement tracking controllers, the following practical issues can be taken into account.
\begin{itemize}
    \item {\it Desired angular velocity}. The attitude controller \eqref{eq:AttCtrl} depends on the desired angular acceleration $\dot{\omega}_{d}$ through \eqref{eq:omgREstp}, which in turn requires the second time derivative of the thrust control \eqref{eq:CtrlPs} given that $\dot{\omega}_{d}(T,\dot{T},\ddot{T},\psi_{d},\dot{\psi}_{d},\ddot{\psi}_{d}) =  \left(R^{T}_{d}\ddot{R}_{d} + \dot{R}^{T}_{d}\dot{R}_{d}\right)^{\vee}$, thus the acceleration error $\dot{\tilde{v}}$ is required for the implementation. To avoid $\dot{v}$, a filtered version of $\omega_{d}$ to approximate $\dot{\omega}_{d}$ as proposed \cite{zuo2014adaptive} may be used as in the following 
\begin{align}
    \dot{\vartheta}_{1} &= \vartheta_{2},\notag \\
    \dot{\vartheta}_{2} &= -2A\vartheta_{2} -A^{2}(\vartheta_{1}-\omega_{d}), \label{eq:2ndFlt}
\end{align}
where $A\in\mathbb{R}^{3\times 3}$ is a diagonal positive definite matrix. Then $\vartheta_{2}\in\mathbb{R}^{3}$ is an approximation of $\dot{\omega}_{d}$. The boundedness of $\|\vartheta_{2}-\dot{\omega}_{d}\|$ is ensured by \cite{zuo2014adaptive}, which contributes to a bounded disturbance $\tau_{d}=M(\vartheta_{2}-\dot{\omega_{d}})$ when using $\vartheta_{2}$ instead of $\dot{\omega}_{d}$ in \eqref{eq:omgREstp}.  The stability of the overall system is ensured by the robustness property in Proposition \ref{thm3}.

\item {Linear velocity measurement.}
The position controller \eqref{eq:CtrlPs} can be implemented without linear velocity measurement $v$ as follows. Recall that $y:=\mathrm{Tanh}(e_{f})$, then $\dot y$ in \eqref{eq:TefP} can be obtained similar to that of \cite{zhang2000global} as 
\begin{align}
    y &= p - k\tilde{x} ,\label{eq:h}\\
    \dot{p} &= -k\left( k_{x}\tilde{x}+p-k\tilde{x} \right) -K_{f}(p-k\tilde{x})  + k^{2}_{x}\left(1-\frac{1}{m}\right)\tilde{x} ,\quad
    p(0) = k\tilde{x}(0). \label{eq:pp}
\end{align}
Thus $y$ can be calculated using only position error $\tilde{x}$.  This feature is useful for non-aggressive applications when a fast inner attitude control loop ensures $R\approx R_d$  \citep{serra2016landing} or, for position regulation \citep{bertrand2011hierarchical} since, in this case, no desired angular velocity and thus no linear velocity is needed in the controller implementation.

\item {\it Vector measurements.} 

There are several ways to acquire vector measurements required for the attitude controller, such as using two magnetometers placed in the UAV with different orientations, a CCD camera or an IMU with one magnetometer and one accelerometer. In this last case, the measured body acceleration $v_a=R^Tr_a$, with $r_a$ depending on the apparent acceleration $ge_z+\dot v$ 
which is not a known constant inertial reference vector. However, when $\|\dot v\|<< g$, $r_a \approx e_z +\bar r_a$, where $\| \bar r_a\|<<1$, which can be considered as a bounded disturbance and tolerated by the robustness of the controller. 
\end{itemize}

\end{remark}

\section{Simulations}\label{Sec:Sim}
To illustrate the main result stated in Theorem \ref{thm3} numerical simulations were carried out. Four scenarios were considered. In Scenario 1, the vector, gyro rate, position, and linear velocity measurements are noise free, and the inertia matrix is known; In Scenario 2 the position, linear velocity, inertial vector, and angular velocity measurements are contaminated by noise, while the inertia matrix remains known; In Scenario 3, a parametric uncertainty of $\pm 30\%$ is added in the inertia matrix while the measurements are noise free; in Scenario 4, the apparent acceleration is measured; therefore, the corresponding inertial reference is not constant, as commented in Remark \ref{rmk6}. In all scenarios, the quadrotor parameters were taken from the experimental UAV from \cite{guerrero2011bounded}, the initial conditions and the design parameters were the same and are shown in Table \ref{tab:Parameters}. The desired trajectory was chosen as a Lemniscata calculated by $x_{d}(t)=[2.5\cos((2\pi/60)t),3\sin((2\pi/60)t)\cos((2\pi/60)t),3]^{T}$ (m). The desired angular acceleration $\dot{\omega}_{d}$ was approximated by the linear filter \eqref{eq:2ndFlt}.

\begin{table}[h]
 \caption{Parameters and initial conditions, for $i=1,2,3$. \label{tab:Parameters}}
 \centering
 \begin{tabular}{@{}lll@{}}
 \hline
 Initial Condition & Value & Unit\\
 \hline
 $\omega(0)$ & $0_{3\times 1}$ & (rad/s)\\
 $R(0)$ & $I_{3}$ & \\
 $x(0)$ & $0_{3\times 1}$ & (m)\\
 $v(0)$ & $0_{3\times 1}$ & (m/s) \\
 $\psi_{d}(0)$ & $\pi /4$ & (rad) \\
 $\dot{\psi}_{d}(0)$& $0$ & (rad/s)\\
 $e_{f}(0)$ &$0_{3\times 1}$ & \\
 $\bar{b}(0)$ &$0_{3\times 1}$& \\
 \hline \hline 
 Parameter & Value & Unit \\\hline
 $M$ & $\mathrm{diag}\{8.28,8.28,15.7\}\times 10^{-3}$& Kgm$^{2}$\\
 $m$ & $0.467$ & Kg\\
 $b$ & $[0.2,0.1,-0.1]^{T}$ & (rad/s) \\
 \hline
 Inertial Vectors & & \\
 \hline
 $r_{1}$ & $[0,0,1]^{T}$ & \\
 $r_{2}$ & $(1/\sqrt{3})[1,1,1]^{T}$ & \\
 $r_{3}$ & $r_{1}^\wedge r_{2}/ \|r_{1}^\wedge r_{2}\|$ & \\
 $k_{i}$ & $0.1$ & \\
 \hline
 Position Controller \eqref{eq:CtrlPs} & & \\
 \hline
 $k$ & $4$ & \\
 $k_{x}$ & $0.1$ & \\
 $K_{f}$ & $I_{3}$ & \\
 \hline
 Attitude Controller \eqref{eq:AttCtrl}& & \\
 \hline
 $K_{c}$ & $I_{3}$ & \\
 $\lambda_{c}$ & $1$ & \\
 $\alpha_{1}$ & $1$ & \\
 $\alpha_{2}$ & $0.01$ & \\
 \hline
 Bias Observer \eqref{eq:bEst}& & \\
 \hline
 $\Lambda_{i}$ & $10 I_{3}$ & \\
 $\gamma_{f}$ & $10000$ & \\
 \hline
 Filter \eqref{eq:2ndFlt}& & \\
 \hline
 $A$ & $20$ & \\
 \hline
 \end{tabular}
 \end{table}

\graphicspath{ {./Figs/} } 
\subsection{Scenario 1. Ideal situation.}

Figure \ref{fig:IdCase} shows the performance of the proposed controller in scenario 1, where the attitude error $\frac{1}{2}\mathrm{tr}(I_{3}-\tilde{R})$, the vector alignment error $z$, the angular velocity error $\tilde{\omega}$, the gyro bias estimation error $\tilde{b}$, and the position error $\tilde{x}$ are shown. The composite error $\eta$ and the auxiliary variable $\mathrm{Tanh}(e_{f})$ in Figs. \ref{fig:IdCase}(f)-(g). Note that after about $5 [s]$ transient, they converge to zero as established in Theorem \ref{thm3}. The control thrust force $f$ and the control torque $\tau$ are shown in Figs. \ref{fig:IdCase}(h) and (i), respectively. The thrust force is observed to reach a steady state value of $4.6$ (N) maintaining away from zero in all time, while the norm of torque control remains below $0.01$ (Nm).

\begin{figure}[h]
	\centering
		\includegraphics[trim = 17mm 0mm 17mm 0mm,clip,scale=0.26]{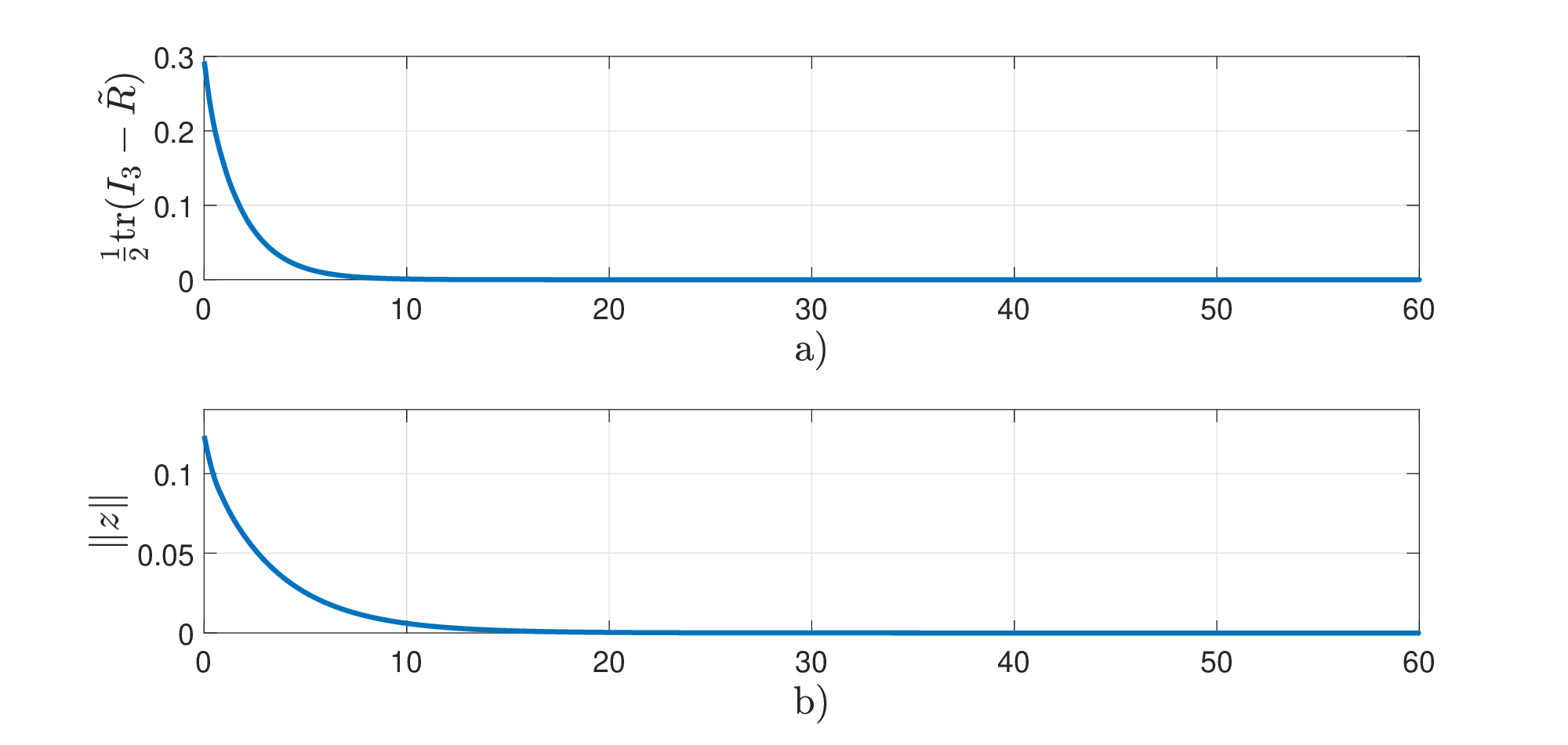}
		\includegraphics[trim = 17mm 0mm 17mm 0mm,clip,scale=0.26]{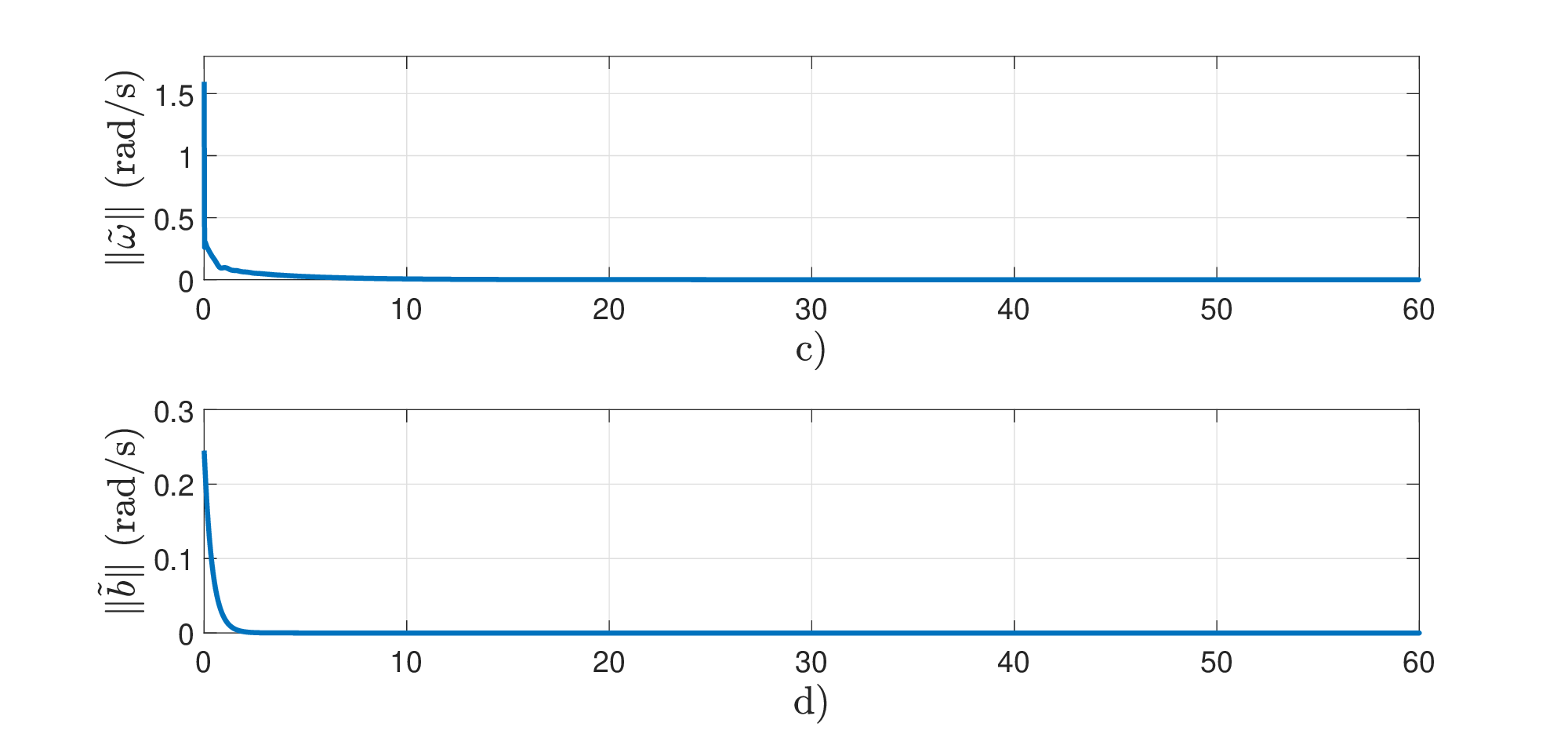}
		\includegraphics[trim = 17mm 0mm 17mm 0mm,clip,scale=0.26]{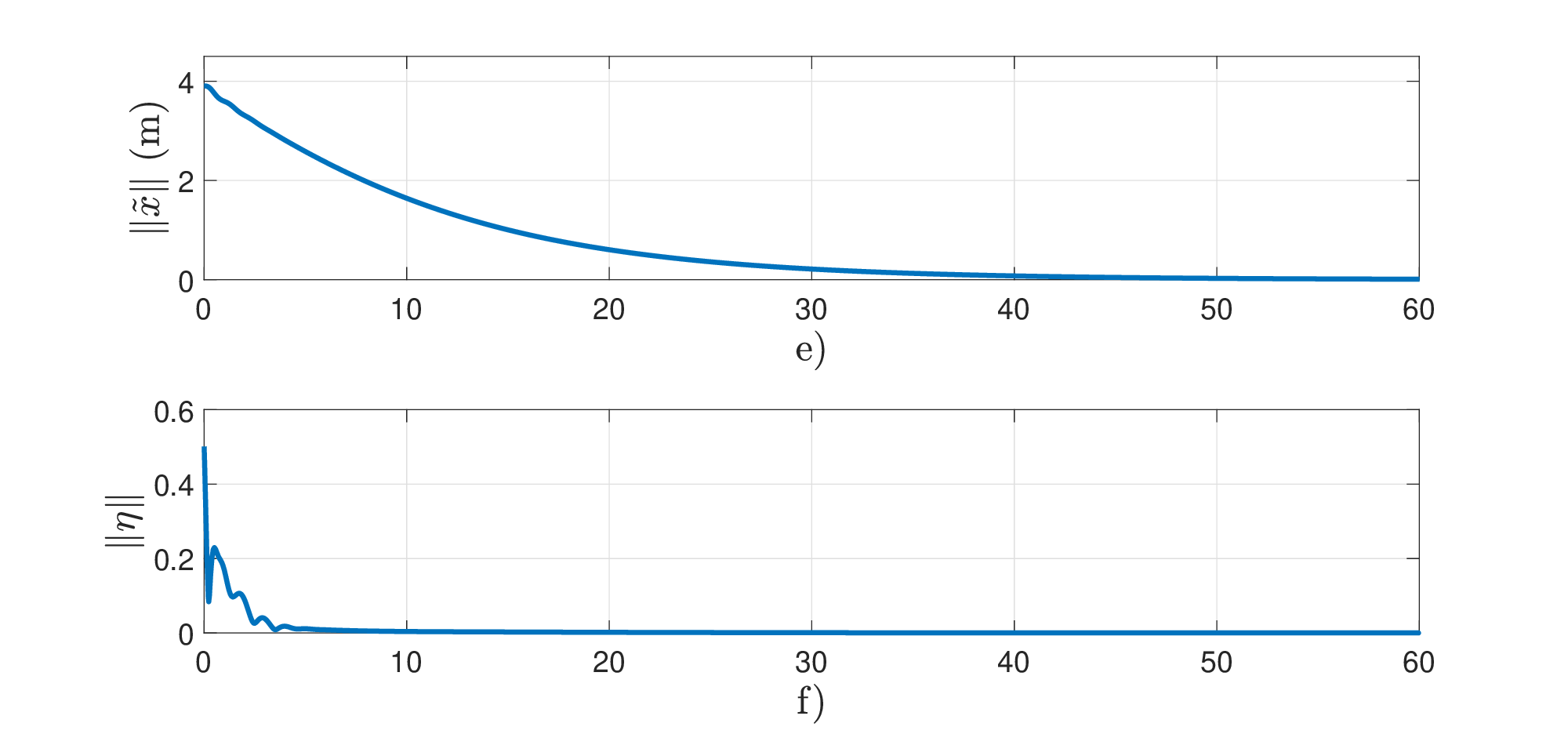}
		\includegraphics[trim = 17mm 0mm 17mm 0mm,clip,scale=0.26]{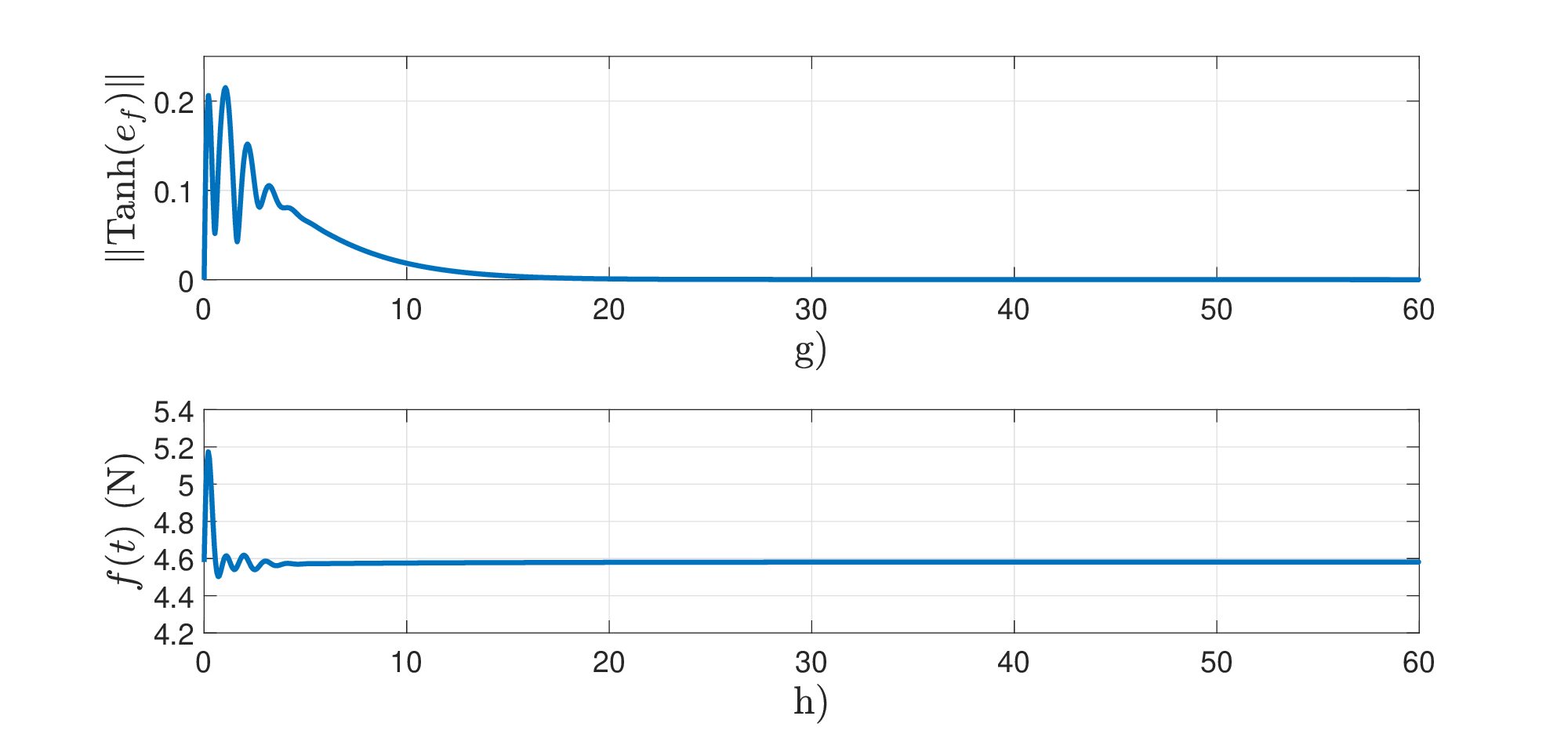}
		\includegraphics[trim = 17mm 80mm 17mm 0mm,clip,scale=0.26]{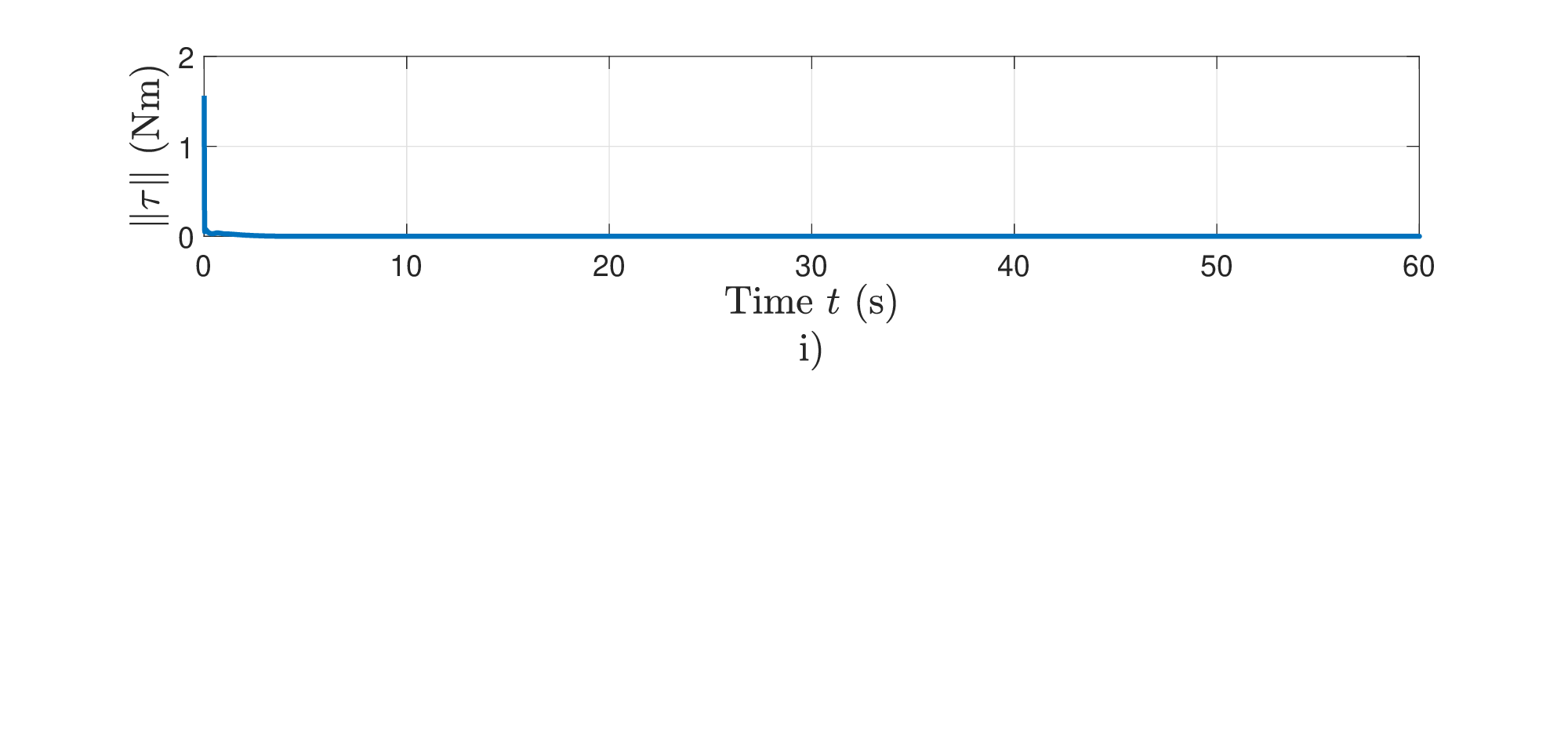}
		\caption{Scenario 1 (ideal situation). Performance of the proposed controller under noise-free measurements and known inertia matrix.}
		\label{fig:IdCase}
\end{figure}
  
A 3D plot is given in Fig. \ref{fig:DesTy} showing the actual path of the AUV and the desired one.  
\begin{figure*}[h]
  \centering
  \subfloat[]{\includegraphics[scale=0.26]{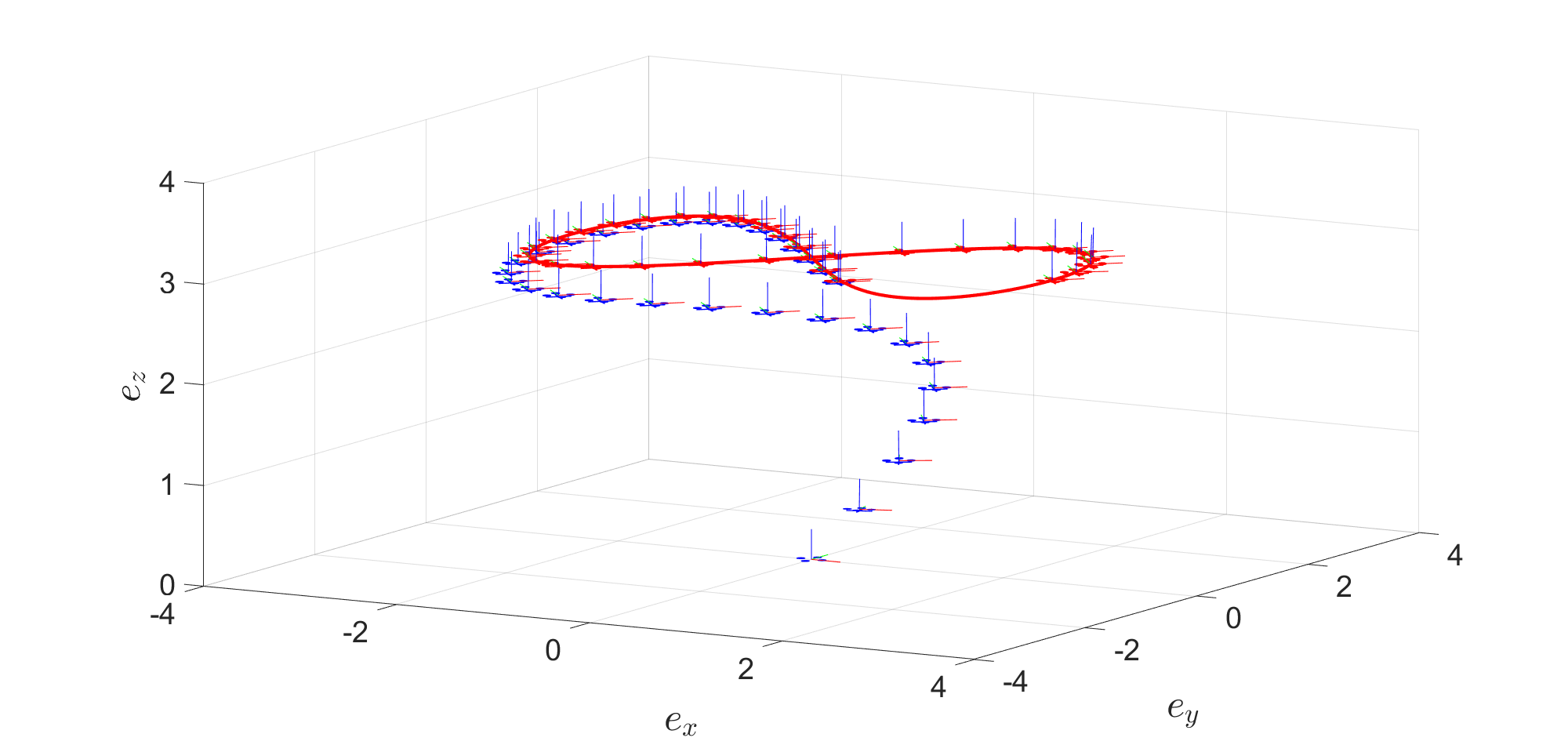}%
  \label{fig_first_case}}
  \hfil
  \subfloat[]{\includegraphics[scale=0.26]{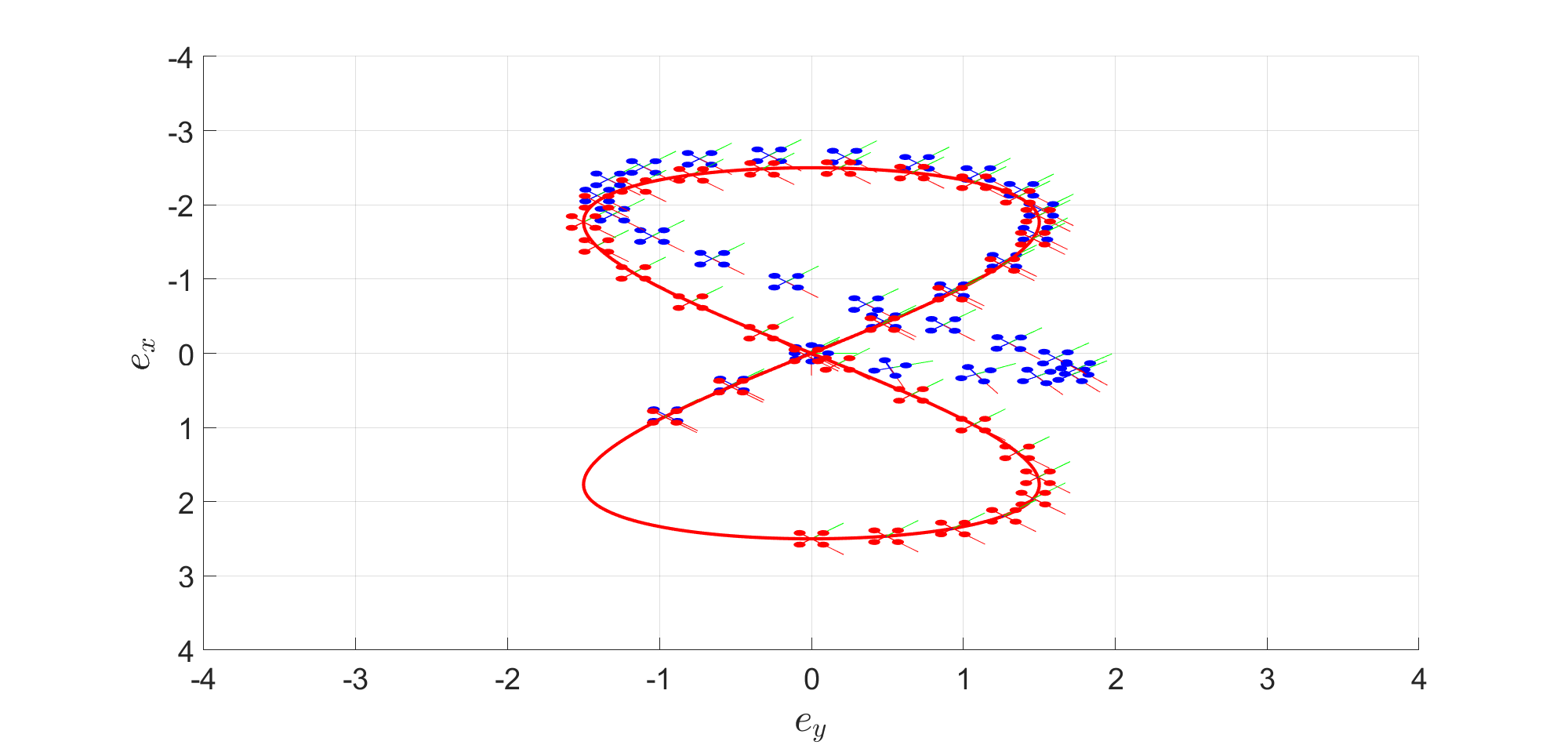}%
  \label{fig_second_case}}
  \vfil
  \subfloat[]{\includegraphics[scale=0.26]{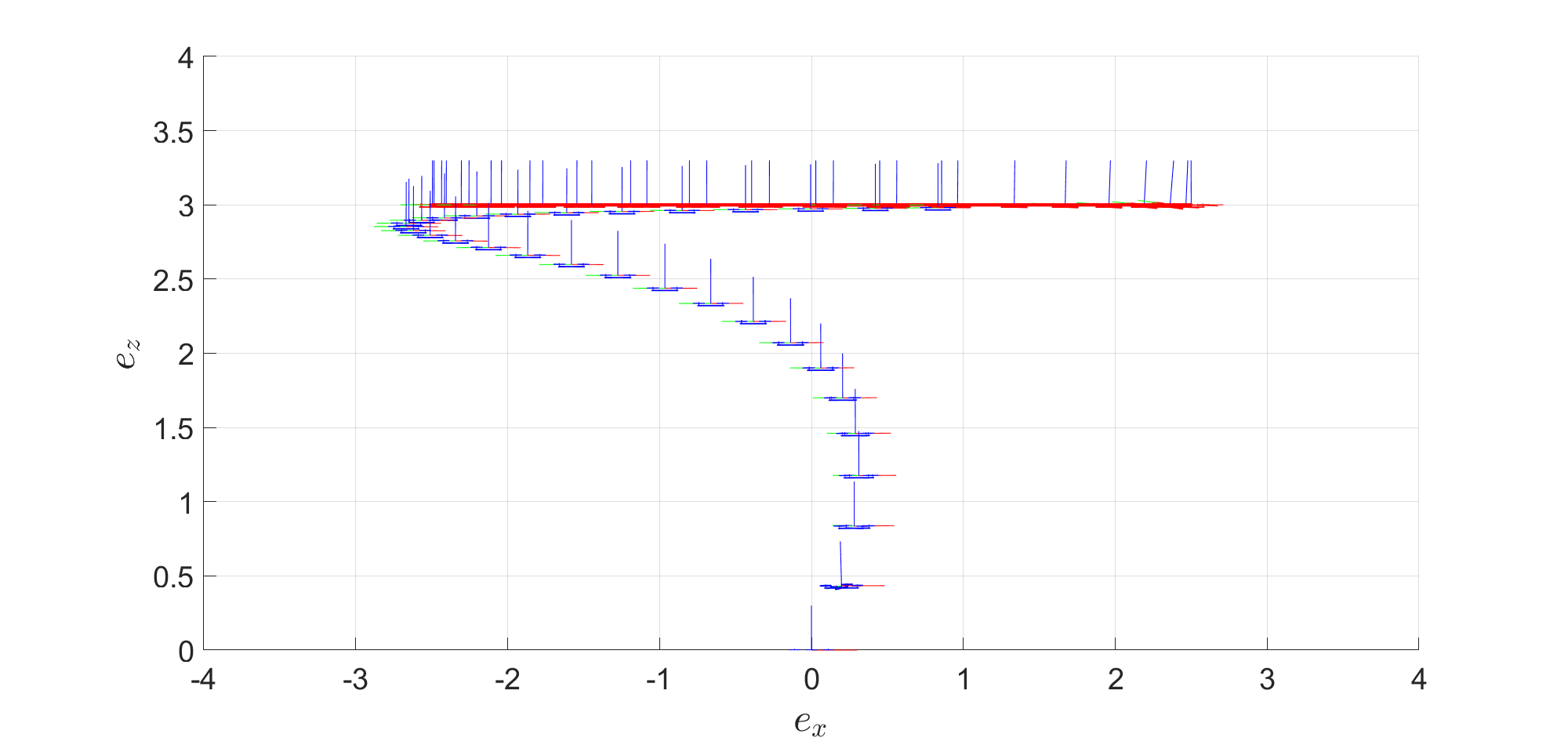}%
  \label{fig_third_case}}
  \caption{Scenario 1 (ideal situation). The actual path of the quadrotor (blue) tracks the desired path (red). (a) A 3D view in the inertial frame  $\{e_{x},e_{y},e_{z}\}$ [m]. (b) Plane $\{e_{x}-e_{y}\}$ view. (c) Plane $\{e_{x}-e_{z}\}$ view.}
  \label{fig:DesTy}
  \end{figure*}

\subsection{Scenario 2. Noisy measurements}

The noisy measurements of $x$, $v$, $\omega_{g}$, and $v_{i}$, for $i=1,2,3$, were simulated as follows: $x_{m}=x+\nu_{1}\varrho_{1}$, $v_{m}=v+\nu_{2}\varrho_{2}$, $\omega_{m}=\omega_{g}+\nu_{3}\varrho_{3}$, and $v_{m,i}= (v_{i}+\nu_{4}\varrho_{4})/\|v_{i}+\nu_{4}\varrho_{4}\|$, where $\varrho_{j}\in\mathbb{R}^{3}$ are zero-mean Gaussian distributions with unit variance for all $j=1,2,3,4$, and $\nu_{1,2}\in N(0,0.05)$, $\nu_{3,4}\in N(0,0.1)$  are uniform distributions. 

The attitude error in Fig. \ref{fig:NysCase}(a) shows no substantial changes compared to Scenario 1, while the norm of the alignment variable $z$ remained oscillating inside of $0.025$ (Fig. \ref{fig:NysCase}(b)), which is of the same magnitude of the noise density in the vector measurements $v_{m,i}$ given in Table \ref{tab:Parameters}). Note that the angular velocity error $\tilde{\omega} = \omega - \omega_{r}= \omega +\lambda_{c}z -\omega_{d} $ in Fig. \ref{fig:NysCase}(c) is most affected by noise due to the sum effect of noise in vector measurements $v_{m,i}$ through $z$, position $x_{m}$, linear velocity $v_{m}$ through $\omega_{d}(T,\dot{T},\psi_{d},\dot{\psi}_{d})$ in \eqref{eq:Rd}. The norm of the gyro-bias estimation error $\tilde{b}$ in Fig. \ref{fig:NysCase}(d) remains below $0.2$, which is due to the sum of the noise in the vector measurements $v_{m,i}$ and the gyro sensor $\omega_{m}$. Noise in $\tilde{b}$ can be reduced by decreasing the observer gain $\Lambda_{i}$ or increasing the filter gain $\gamma_{f}$,  without affecting the performance of the overall system. Figs. \ref{fig:NysCase}(e)-(f) illustrate the norm of the state variables $\tilde{x}$, $\eta$, and $\mathrm{Tanh}(e_{f})$, all remaining below $0.1$, according to the sum of the noise density in the position and speed measurements. The control thrust force $f$ and the torque control $\tau$ in  Figs. \ref{fig:NysCase}(h) and (i)  remained bounded and close to the values in Scenario 1, as established in Proposition \ref{thm3}. 

\begin{figure}[h]
	\centering
		\includegraphics[trim = 17mm 0mm 17mm 0mm,clip,scale=0.26]{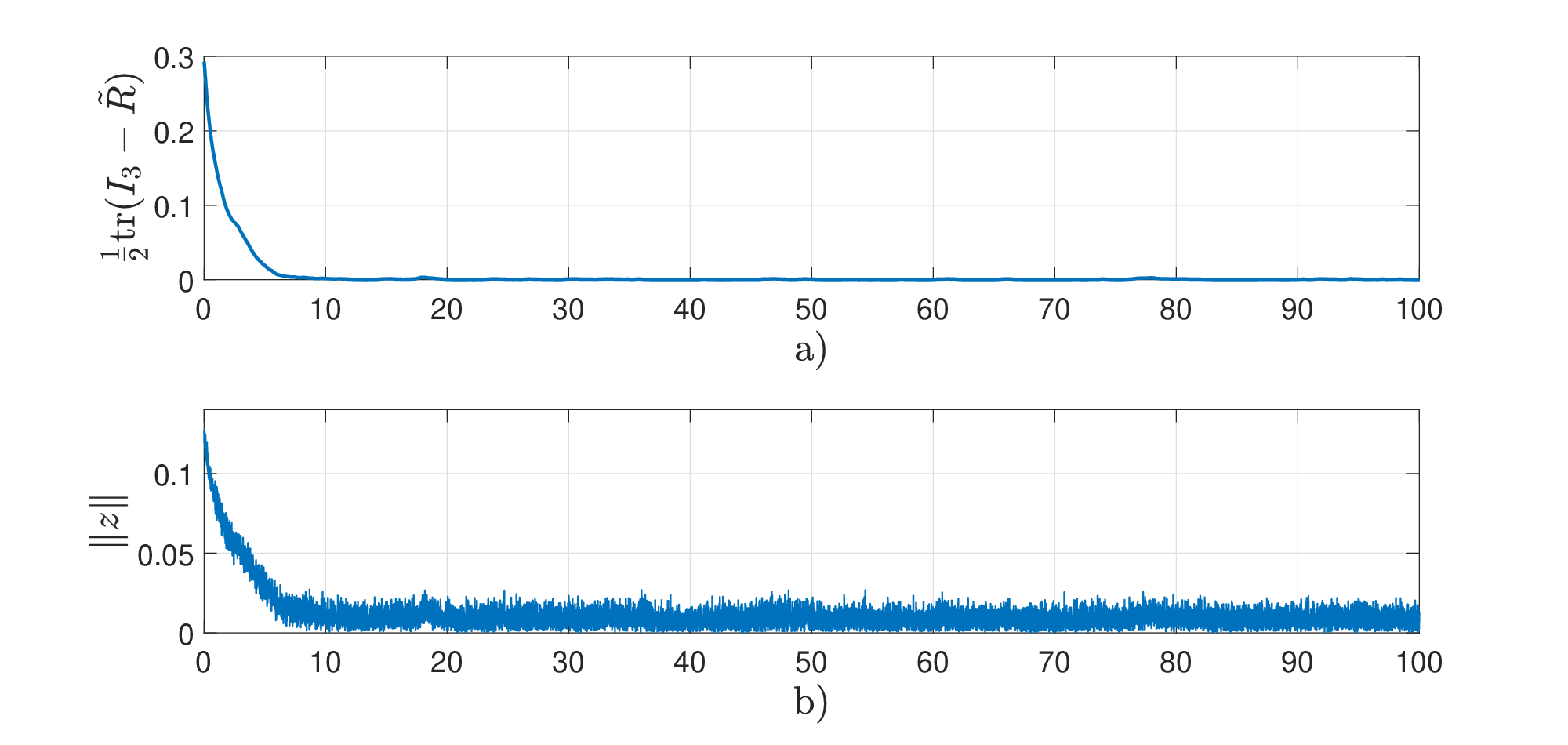}
		\includegraphics[trim = 17mm 0mm 17mm 0mm,clip,scale=0.26]{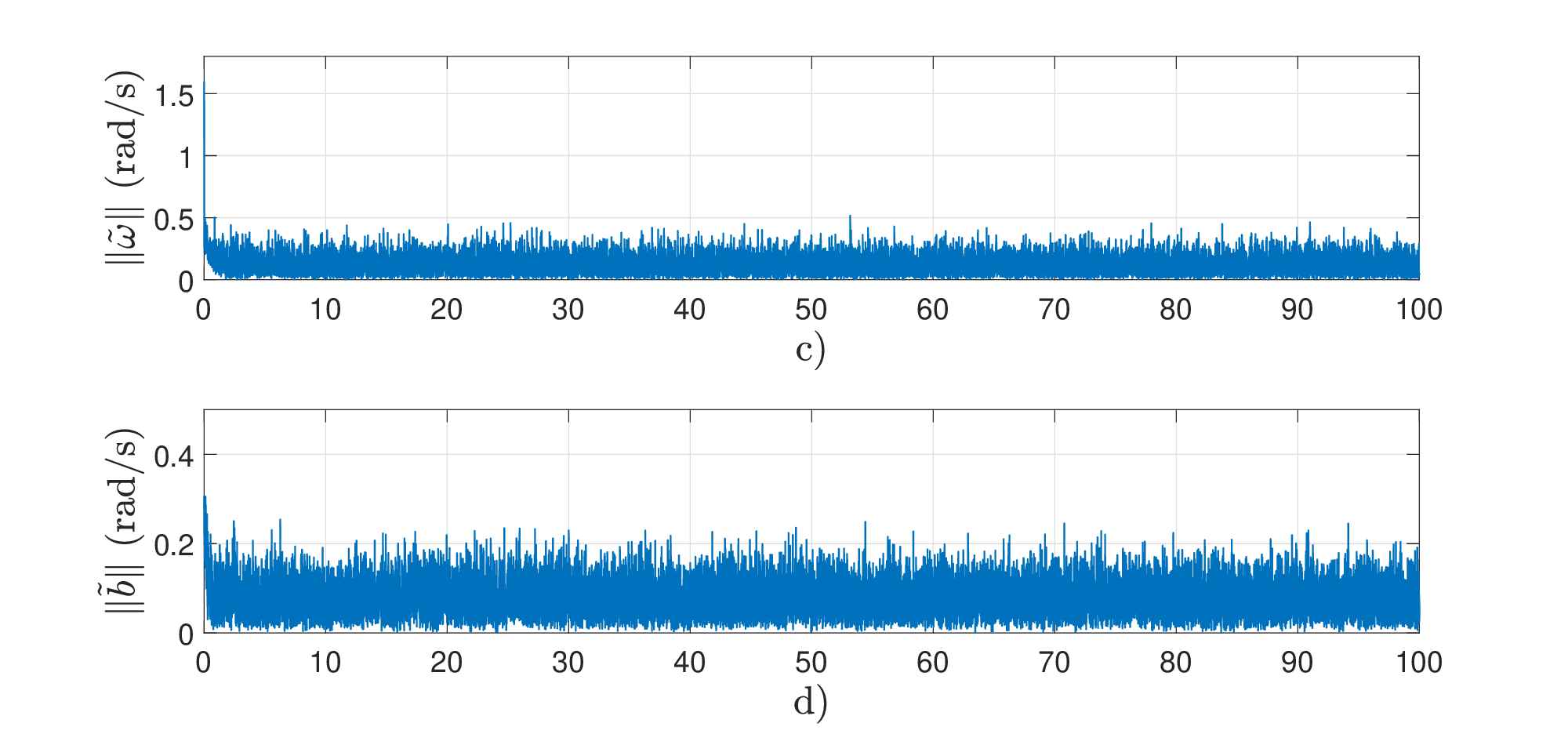}
		\includegraphics[trim = 17mm 0mm 17mm 0mm,clip,scale=0.26]{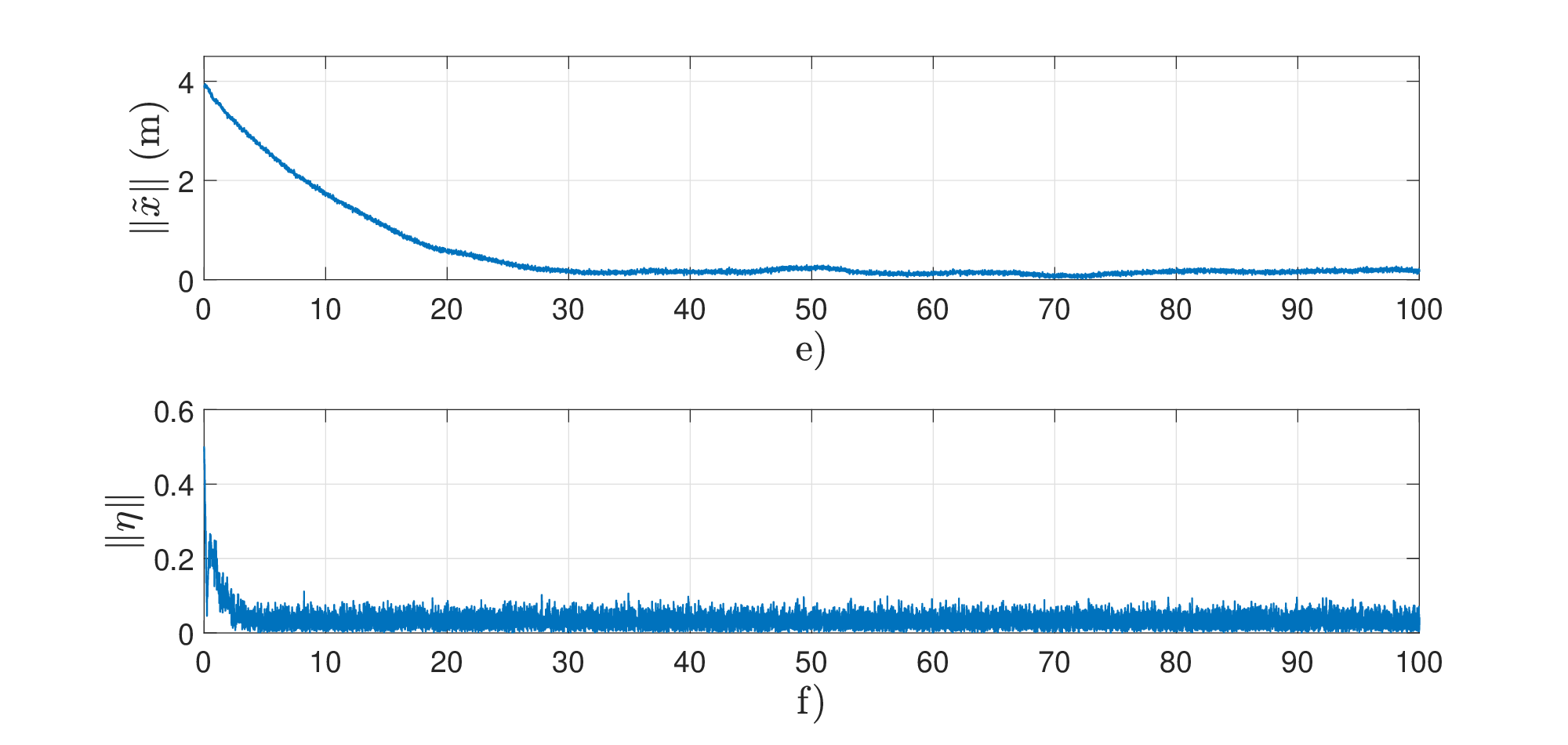}
		\includegraphics[trim = 17mm 0mm 17mm 0mm,clip,scale=0.26]{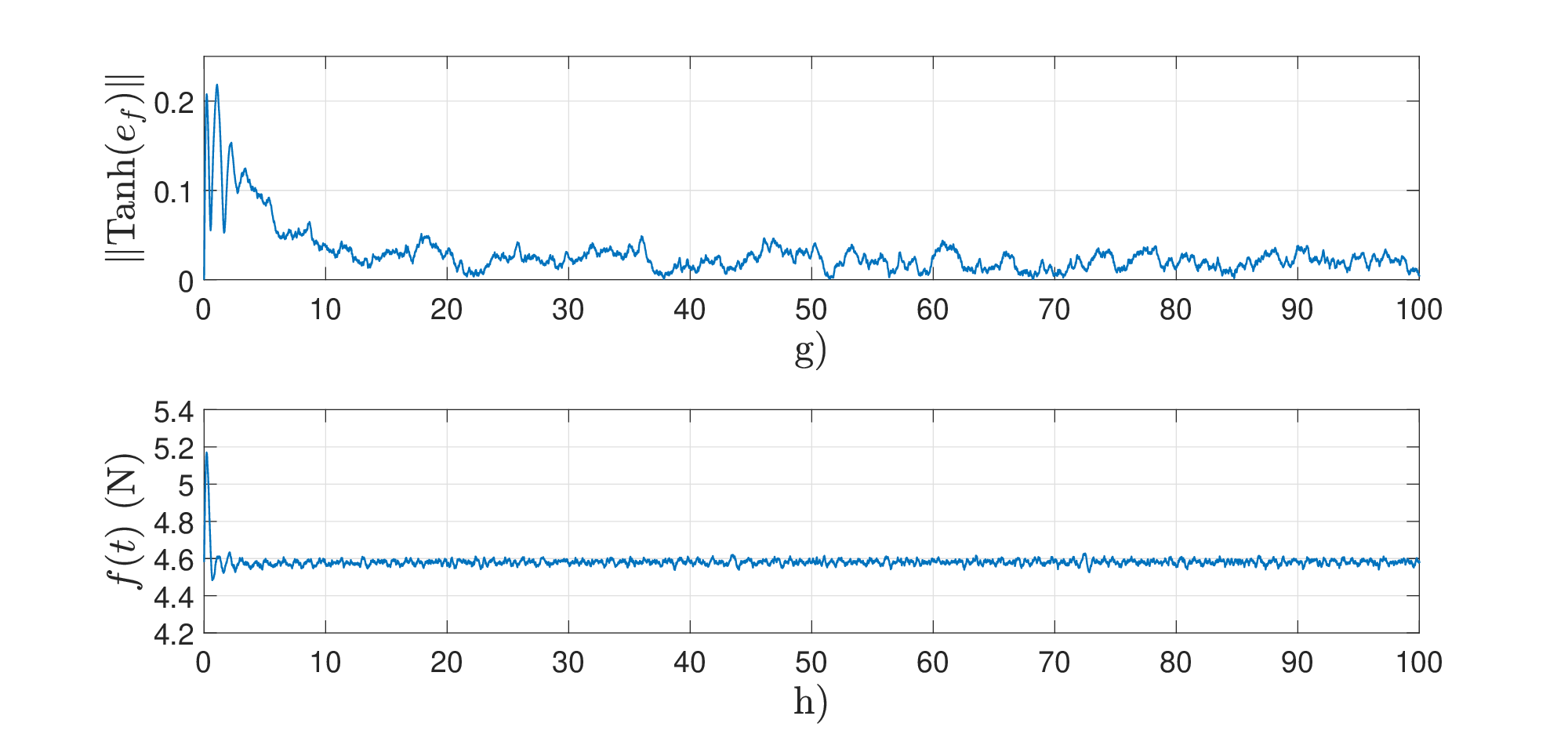}
		\includegraphics[trim = 17mm 80mm 17mm 0mm,clip,scale=0.26]{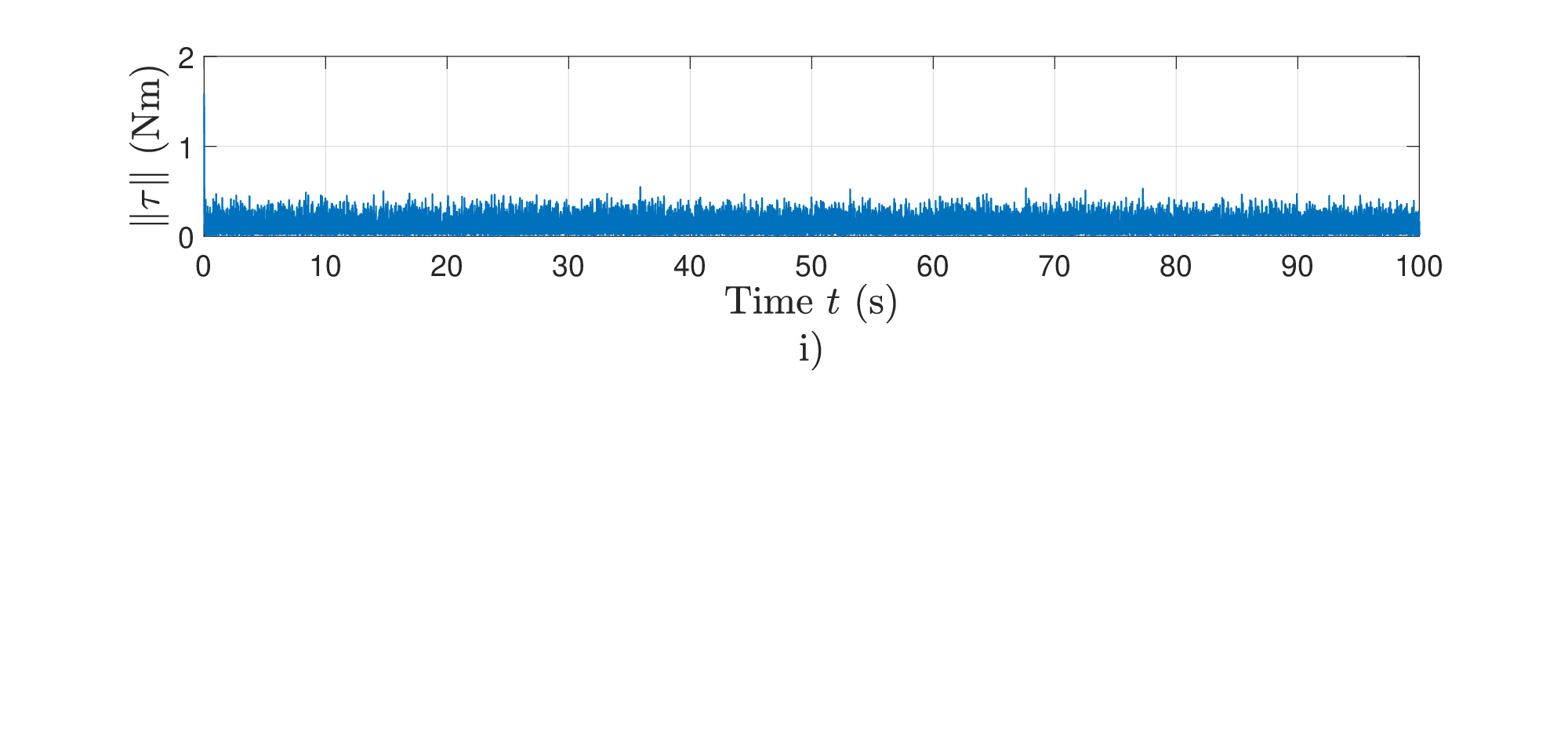}
		\caption{Scenario 2 (noisy-measurement situation). Performance of the proposed controller under noisy measurements.}
		\label{fig:NysCase}
\end{figure}

\subsection{Scenario 3. Parametric uncertainty.}

For this scenario, $\pm 30\%$ uncertainty in the inertia matrix with noise-free measurements is considered. Fig. \ref{fig:UPCase} shows the simulation results. No significant changes in performance are observed, with the convergence times of the error state and the control efforts practically the same. 

\begin{figure}[h]
	\centering
		\includegraphics[trim = 17mm 0mm 17mm 0mm,clip,scale=0.26]{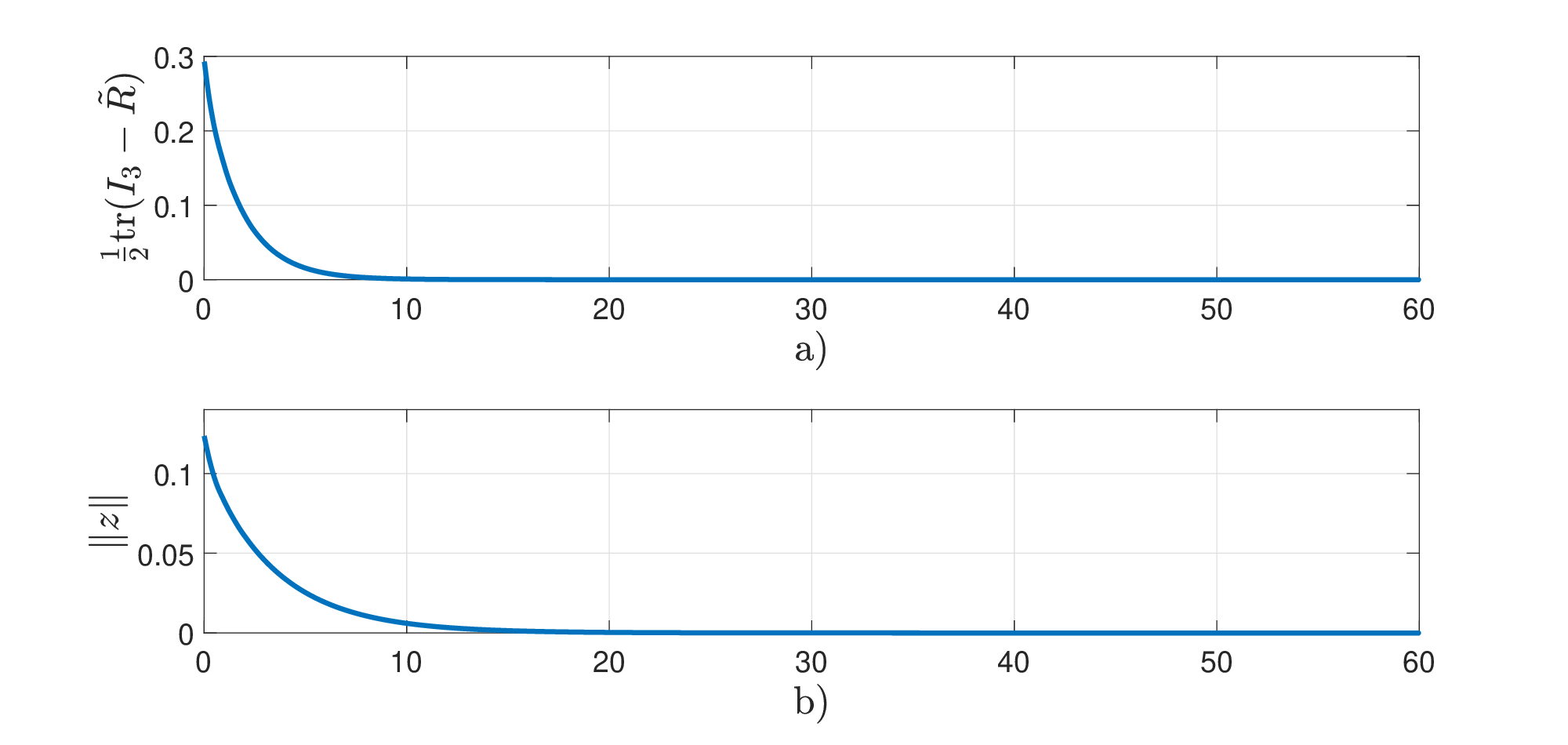}
		\includegraphics[trim = 17mm 0mm 17mm 0mm,clip,scale=0.26]{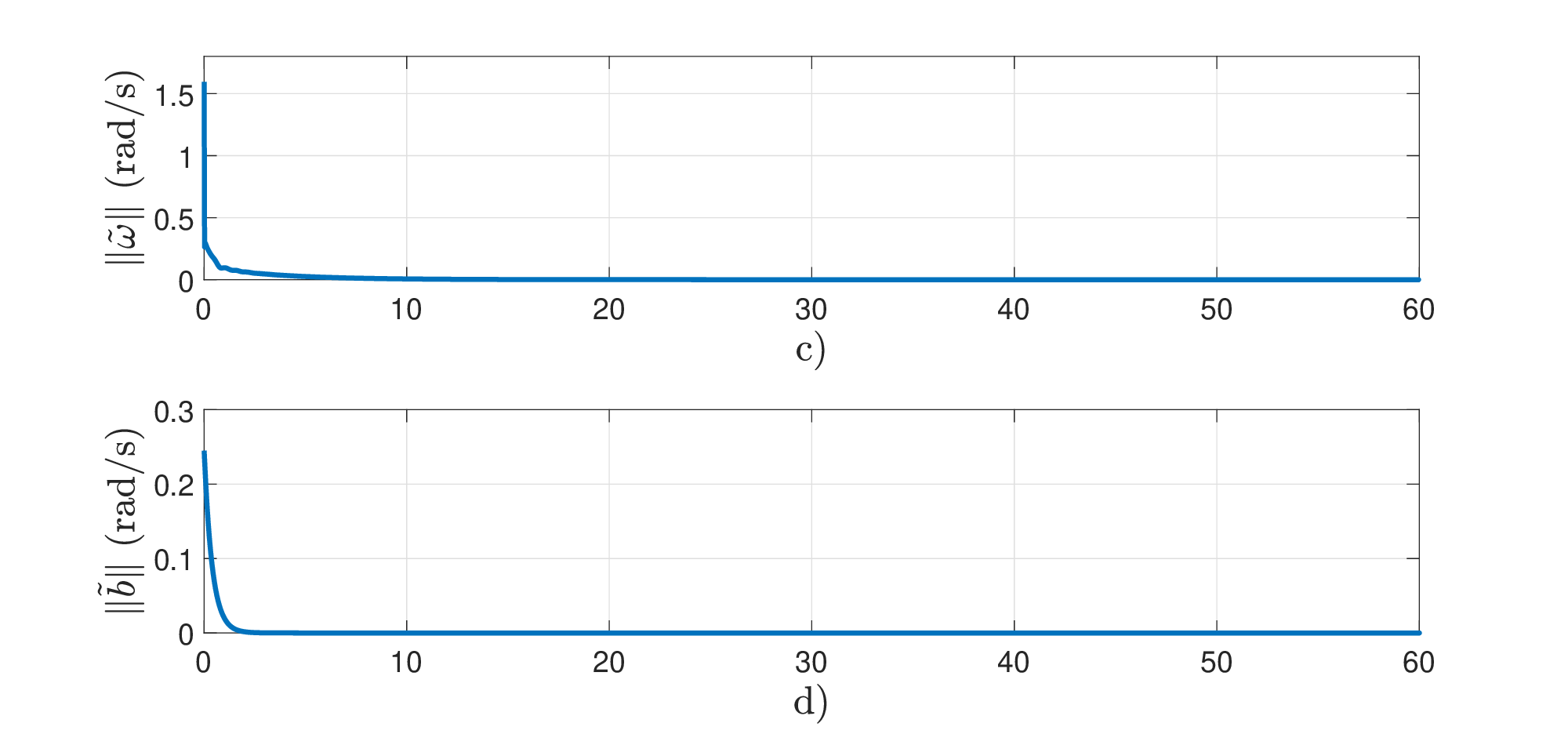}
		\includegraphics[trim = 17mm 0mm 17mm 0mm,clip,scale=0.26]{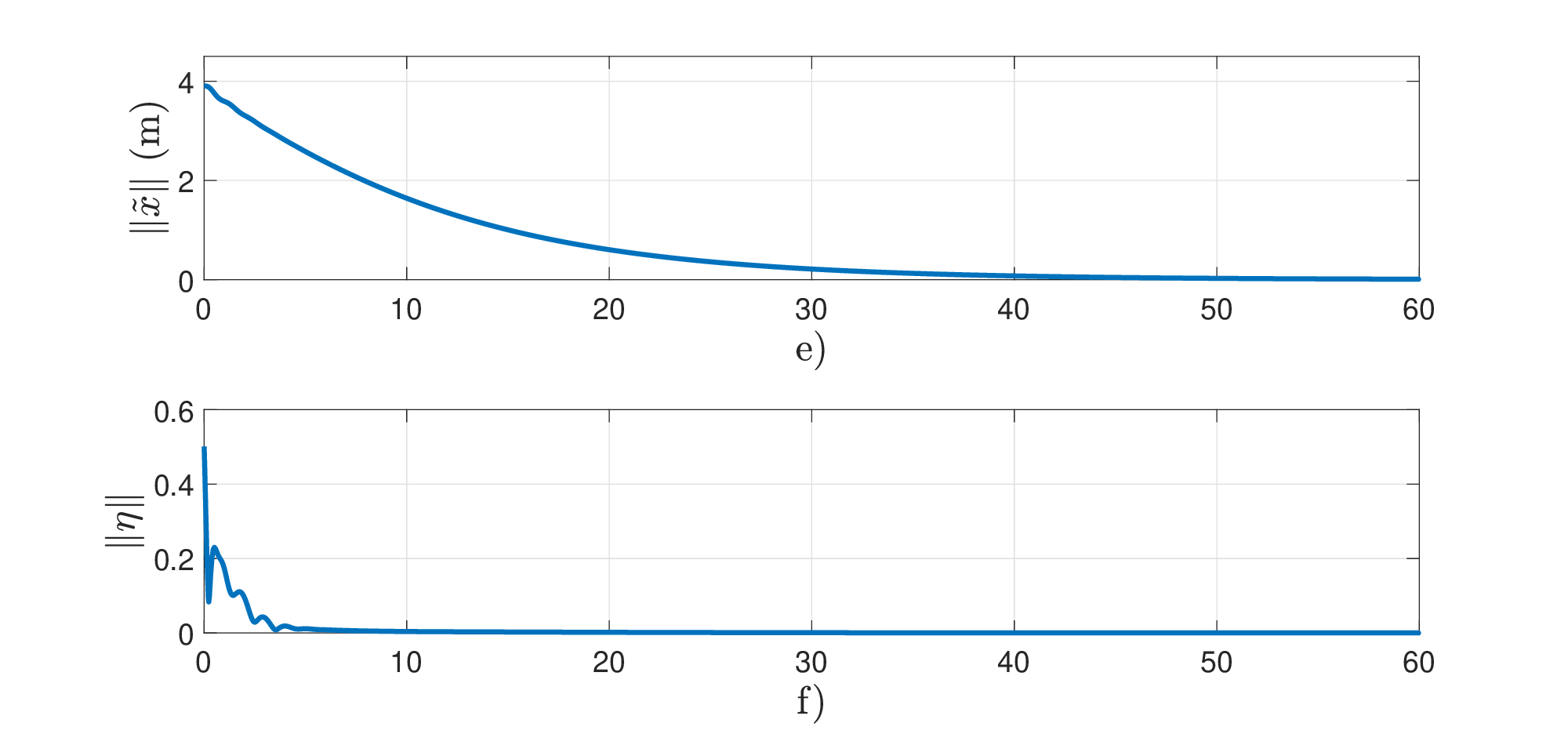}
		\includegraphics[trim = 17mm 0mm 17mm 0mm,clip,scale=0.26]{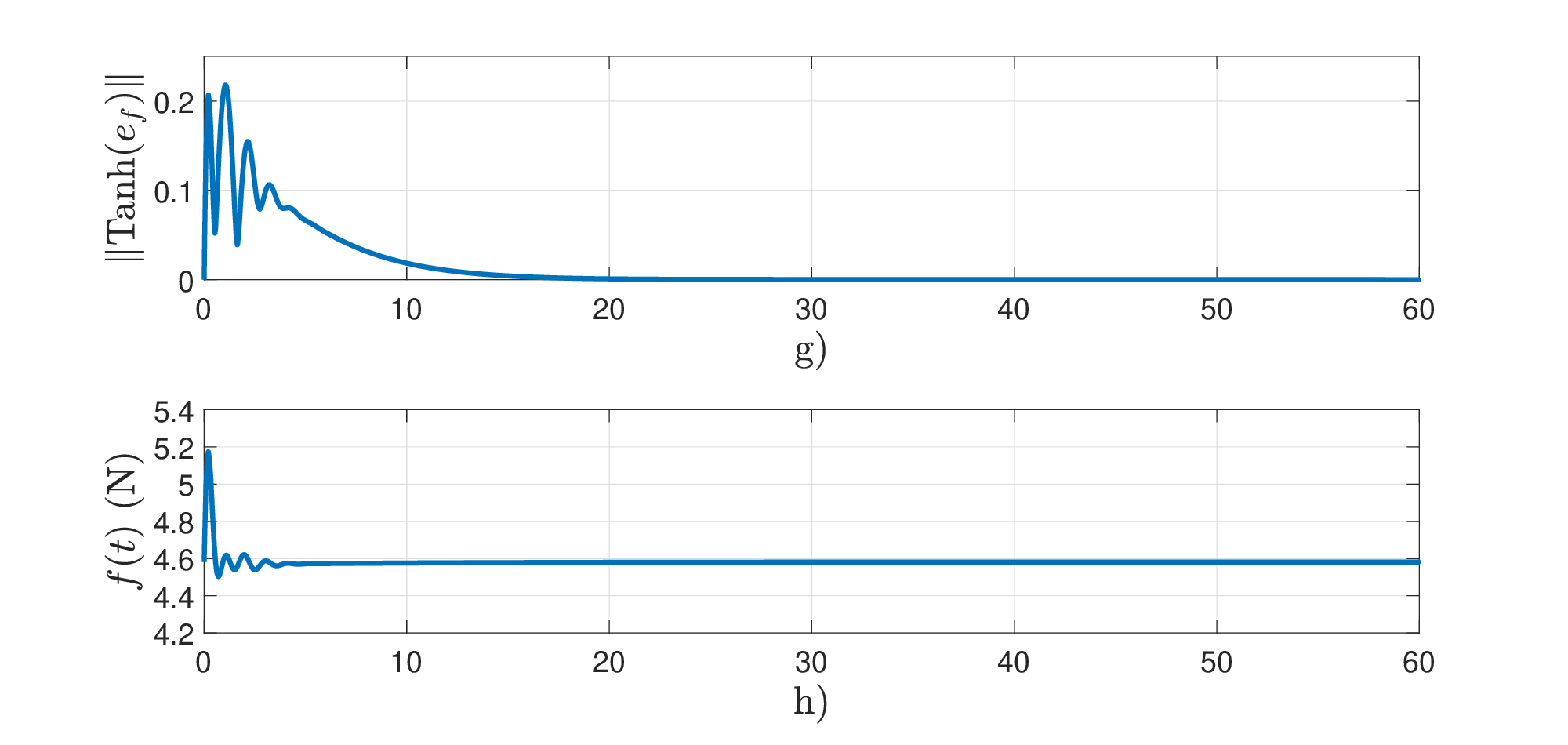}
		\includegraphics[trim = 17mm 80mm 17mm 0mm,clip,scale=0.26]{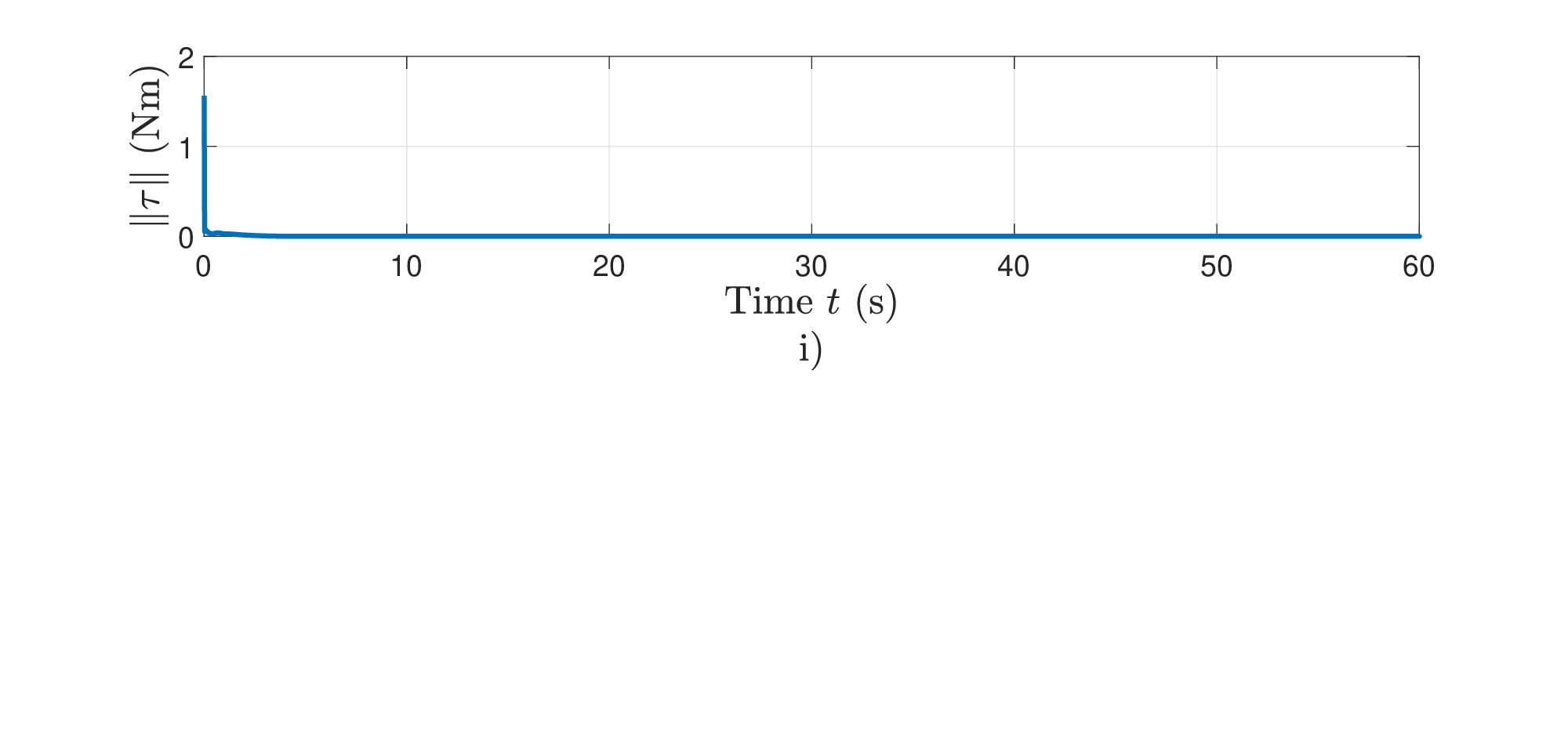}
		\caption{Scenario 3 (uncertain inertia matrix). Performance of the proposed controller under uncertainty in the inertia matrix.}
		\label{fig:UPCase}
\end{figure}

  \subsection{Scenario 4. Apparent acceleration.}

The accelerometer is used to acquire one of the two vector measurements as commented in Remark \ref{rmk6} in this scenario, keeping all the conditions as in Scenario 1, except that the inertial reference vector $r_1$ in Table \ref{tab:Parameters} was replaced by $( ge_z + \dot{v}) / \|ge_z + \dot{v}\|$, i.e., the first vector measurement is $v_{1}=R^{T}( ge_z + \dot{v}) / \|ge_z + \dot{v}\|$. 
Fig. \ref{fig:ApAcc} illustrates the simulation results. As can be observed, there are no substantial differences in controller performance in scenario 1, except for a slight oscillation in the norm $\|\mathrm{Tanh}(e_{f})\|$ after $20$ (s).

\begin{figure}[h]
	\centering
		\includegraphics[trim = 17mm 0mm 17mm 0mm,clip,scale=0.26]{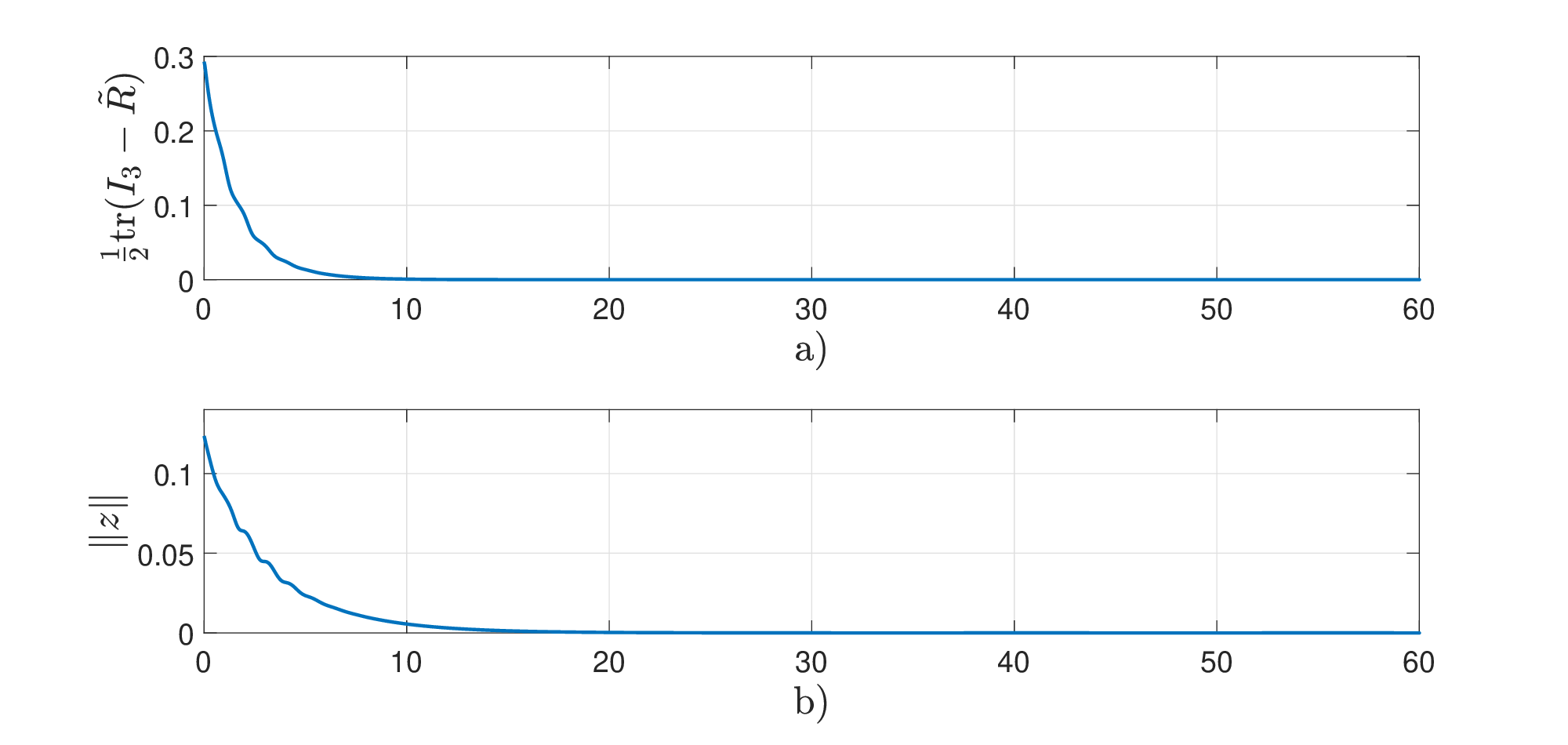}
		\includegraphics[trim = 17mm 0mm 17mm 0mm,clip,scale=0.26]{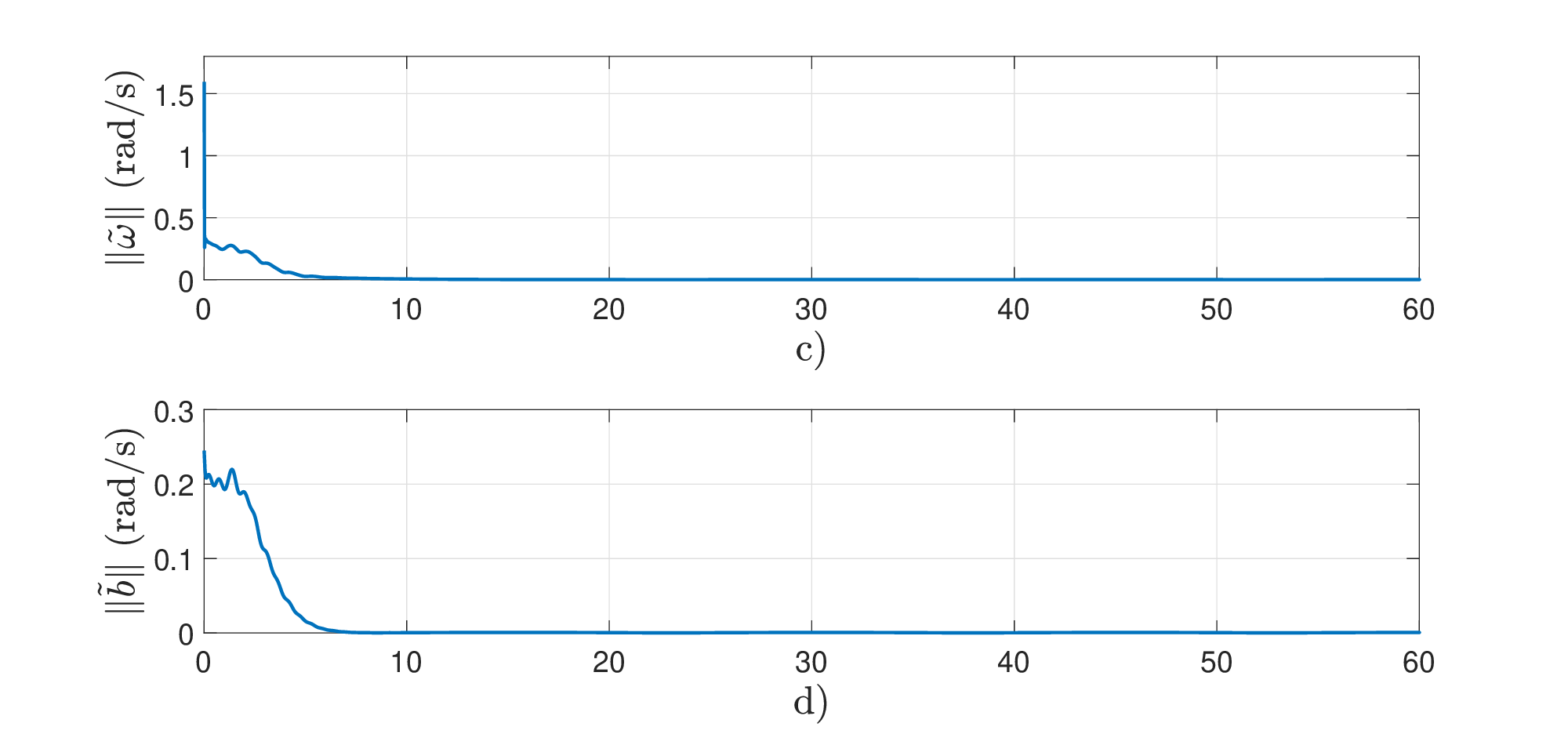}
		\includegraphics[trim = 17mm 0mm 17mm 0mm,clip,scale=0.26]{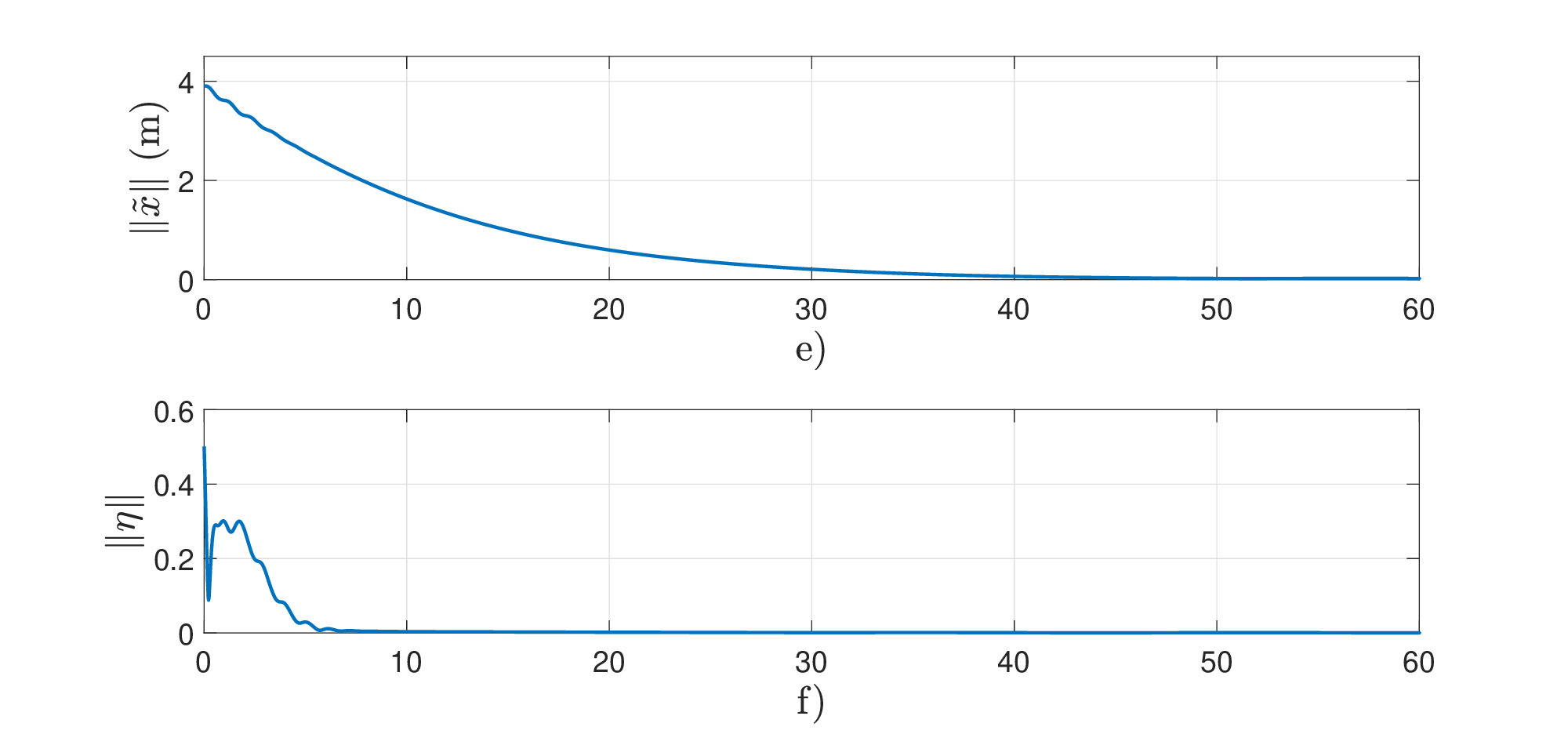}
		\includegraphics[trim = 17mm 0mm 17mm 0mm,clip,scale=0.26]{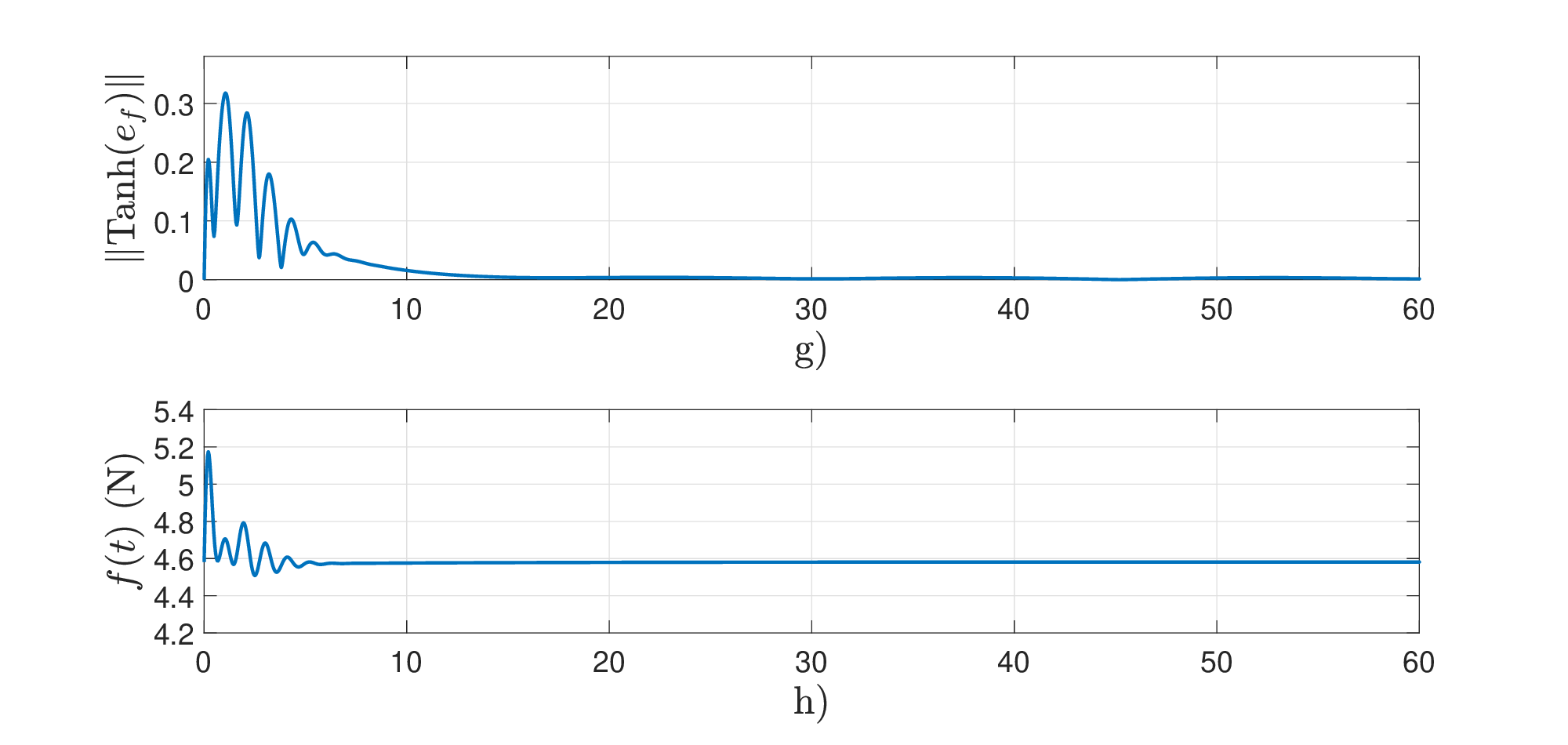}
		\includegraphics[trim = 17mm 80mm 17mm 0mm,clip,scale=0.26]{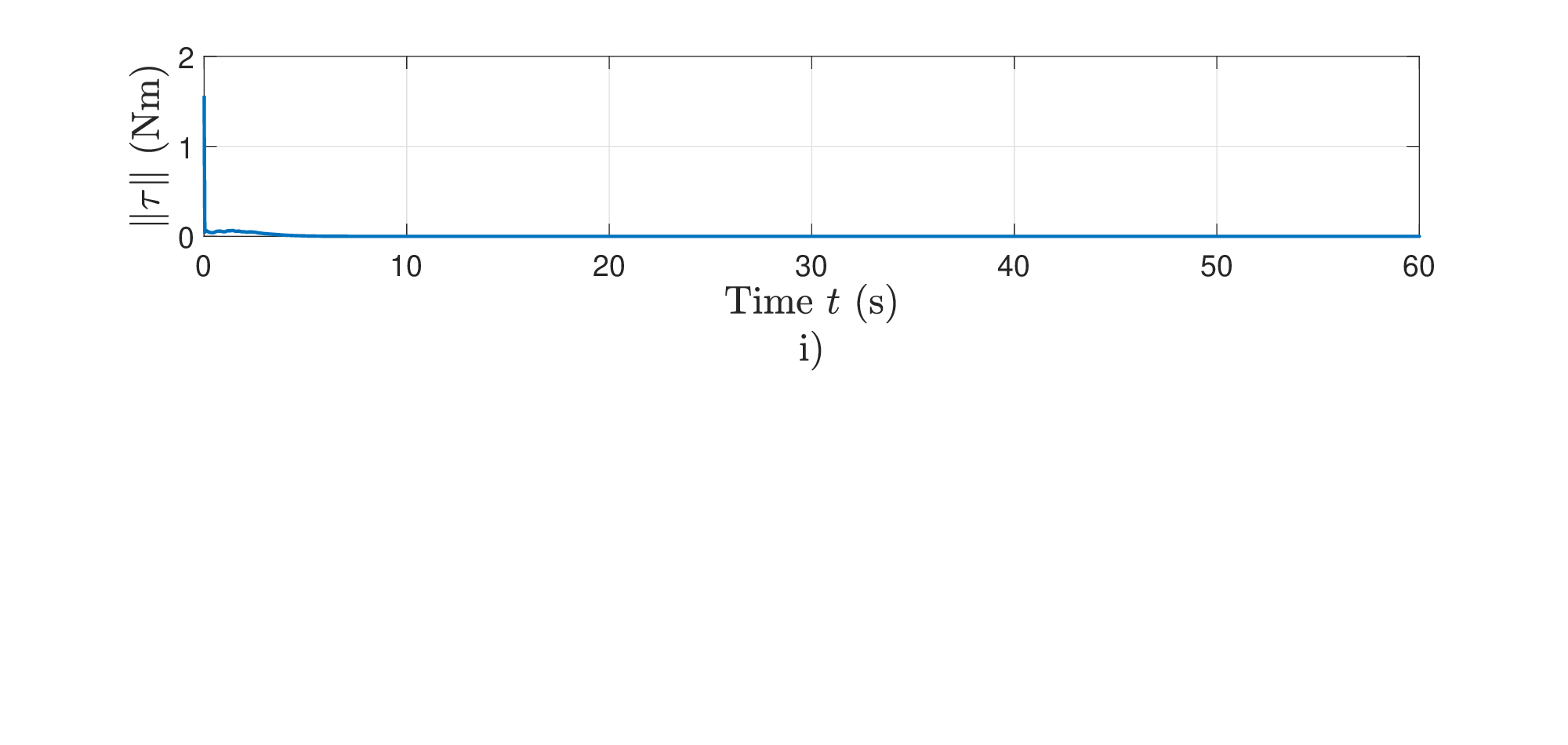}
		\caption{Scenario 4 (Apparent acceleration). Performance of the proposed controller using an accelerometer.}
		\label{fig:ApAcc}
\end{figure}

\section{Conclusion}\label{Sec:Conc}
This paper has designed a tracking control for a quadrotor UAV using inertial vector measurements and gyro rates. No attitude representation or measurement in any form is needed to implement the controller. The semiglobal exponential stability of the overall closed-loop system was demonstrated.  Numerical simulations were included to verify the theoretical results, illustrating the robustness of the proposed controller in the presence of measurement noise, inertia parametric uncertainty, and using apparent acceleration as one of the vector measurements. 

Future work includes adding learning-based modules to model aerodynamic effects and model mismatches to face challenges in real-world applications \citep{song2023reaching}.







\noindent

\bigskip

\appendix
\section{Practical stability}
Consider the system
\begin{equation}\label{eq:system}
    \dot x=f(t,x)+d(t), \ \forall t\geq 0,
\end{equation}
where $f:\mathbb{R}_+\times \mathbb{D}\to \mathbb{R}^n$ is piece-wise continuous in $t$ and locally Lipschitz in $x$, $\mathbb{D}\subseteq \mathbb{R}^n$ is a domain that contains the origin, $f(t,0)=0_{n\times 1}$, and $d(t)$ with $\|d(t)\| \leq \bar{d}$ for some unknown constant $\bar d>0$, representing bounded disturbances. 

\begin{lem}[\bf{Practical stability}]\label{lem-pracStab}
If there exists a continuously differentiable function $V:\mathbb{R}_+\times \mathbb{D}\to \mathbb{R}$ such that
\begin{align*}
    &k_1\|x\|^2\leq V(t,x)\leq k_2 \|x\|^2, \\
    &\frac{\partial V}{\partial t}+\frac{\partial V}{\partial x}f(t,x)\leq -(k_3+\epsilon)
\|x\|^2,\\
&\|\frac{\partial V}{\partial x}\|\leq k_4\|x\|,
\end{align*}
$\forall t\geq 0$ and $\forall x\in \mathbb{D}$, where $k_1,\ k_2, \ k_3,\ k_4$ are positive constants and $\epsilon>0$. Then, 
\begin{enumerate}[(i)]
    \item In the absence of disturbances, i.e., $d(t)=0_{n\times 1}$, the equilibrium $x=0_{n\times 1}$ is {\it exponentially stable}, i.e., 
\begin{equation*}
    \|x(t)\|\leq \big(\frac{k_2}{k_1}\big)^{\frac{1}{2}} \|x(t_0)\| \exp{\big\{-\frac{k_3+\epsilon}{2k_2} (t-t_0)\big\}}, \ \ \forall t\geq t_0, \text{and }\forall x(t_0)\in \mathbb{D}. 
\end{equation*}
\item In the presence of disturbances, the origin $x=0_{n\times 1}$ is {\it practically stable} in the sense that the solution is uniformly ultimately bounded, that is,
\begin{equation*}
    \|x(t)\|\leq \big(\frac{k_2}{k_1}\big)^{\frac{1}{2}} \|x(t_0)\| \exp{\big\{-\frac{k_3}{2k_2} (t-t_0)\big\}}+\frac{(k_4 \bar d)}{2} \sqrt{\frac{k_2}{k_1k_3\epsilon}}, \ \ \forall t\geq t_0, \text{and }\forall x(t_0)\in \mathbb{D}. 
\end{equation*}
\end{enumerate}
If $\mathbb{D}=\mathbb{R}^n\backslash \mathbb{U}$ with $\mathbb{U}$ containing some isolated points other than the origin, then the stability is almost global; If $\mathbb{U}$ is empty, then the stability is global. 
\end{lem}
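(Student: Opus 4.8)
\emph{Proof plan.} The plan is to run a standard Lyapunov--comparison argument, computing $\dot V$ once along the trajectories of \eqref{eq:system} and splitting into the two items only at the point where the disturbance enters.

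First I would differentiate $V$ along \eqref{eq:system}: by the chain rule, the Cauchy--Schwarz inequality, and the three hypotheses,
\[
\dot V \;=\; \frac{\partial V}{\partial t}+\frac{\partial V}{\partial x}\bigl(f(t,x)+d(t)\bigr) \;\le\; -(k_3+\epsilon)\|x\|^2 + \Bigl\|\frac{\partial V}{\partial x}\Bigr\|\,\|d(t)\| \;\le\; -(k_3+\epsilon)\|x\|^2 + k_4\bar d\,\|x\|.
\]
For item~(i) I set $\bar d=0$, so $\dot V\le-(k_3+\epsilon)\|x\|^2\le-\frac{k_3+\epsilon}{k_2}V$ using $V\le k_2\|x\|^2$; the comparison lemma then gives $V(t)\le V(t_0)\exp\{-\frac{k_3+\epsilon}{k_2}(t-t_0)\}$, and sandwiching with $k_1\|x\|^2\le V\le k_2\|x\|^2$ and taking square roots yields the stated exponential bound.

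For item~(ii) I would absorb the cross term with Young's inequality in the form $k_4\bar d\,\|x\|\le\epsilon\|x\|^2+\frac{k_4^2\bar d^2}{4\epsilon}$ --- the split is chosen precisely so that $\epsilon\|x\|^2$ cancels the matching piece of $-(k_3+\epsilon)\|x\|^2$, leaving
\[
\dot V \;\le\; -k_3\|x\|^2+\frac{k_4^2\bar d^2}{4\epsilon} \;\le\; -\frac{k_3}{k_2}V+\frac{k_4^2\bar d^2}{4\epsilon}.
\]
Applying the comparison lemma to this scalar linear differential inequality gives $V(t)\le V(t_0)\exp\{-\frac{k_3}{k_2}(t-t_0)\}+\frac{k_2k_4^2\bar d^2}{4k_3\epsilon}\bigl(1-\exp\{-\frac{k_3}{k_2}(t-t_0)\}\bigr)$; then using $k_1\|x\|^2\le V$ on the left, $V(t_0)\le k_2\|x(t_0)\|^2$ on the right, bounding the last parenthesis by $1$, and finishing with $\sqrt{a+b}\le\sqrt a+\sqrt b$ produces exactly the ultimate-boundedness estimate with residual term $\frac{k_4\bar d}{2}\sqrt{k_2/(k_1k_3\epsilon)}$.

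Finally, for the global / almost-global claim I would note that everything above is carried out on whatever domain $\mathbb D$ the hypotheses are assumed; since $V$ is nonincreasing in item~(i) and ultimately bounded in item~(ii), its sublevel sets are forward invariant, so the trajectory remains in the region where the estimates apply. Hence the conclusions are global when $\mathbb D=\mathbb R^n$, and when $\mathbb D$ omits only a set $\mathbb U$ of isolated points, that set has Lebesgue measure zero and the property is almost global. I do not expect a genuine obstacle here; the only steps requiring care are the precise choice of the Young's-inequality split and the constant bookkeeping so that the residual term comes out in the stated form, together with the routine invariance argument underpinning the last claim.
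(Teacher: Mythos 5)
Your proposal is correct and follows essentially the same route as the paper's proof: differentiate $V$ along the perturbed system, bound the disturbance term via Cauchy--Schwarz, absorb it with the Young split $k_4\bar d\|x\|\le\epsilon\|x\|^2+\frac{(k_4\bar d)^2}{4\epsilon}$, pass to the comparison inequality $\dot V\le-\frac{k_3}{k_2}V+\frac{(k_4\bar d)^2}{4\epsilon}$, and sandwich with $k_1\|x\|^2\le V\le k_2\|x\|^2$ together with $\sqrt{a+b}\le\sqrt a+\sqrt b$. Your explicit handling of item (i) and of the (almost-)global remark only fills in steps the paper leaves implicit, so no substantive difference remains.
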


\begin{proof}
    The time derivative of $V(x,t)$ along the solution of the system \eqref{eq:system} is
\begin{align}
    \dot V(x,t)&=\frac{\partial V}{\partial t}+\frac{\partial V}{\partial x}\Big( f(x,t)+d(t)\Big) \notag\\
    &\leq -(k_3+\epsilon)\|x\|^2+k_4\bar d \|x\| \label{eq:inequality1} \\
    &\leq -k_3\|x\|^2+\frac{(k_4 \bar d)^2}{4\epsilon} \notag \\
    &\leq -\frac{k_3}{k_2}V+\frac{(k_4 \bar d)^2}{4\epsilon}. \label{eq:inequality2}
\end{align}
Therefore,
\begin{equation*}
    V(x,t)\leq \exp{\left\{-\frac{k_3}{k_2}(t-t_0)\right\}} V\big(x(t_0),t_0\big)+\frac{(k_4 \bar d)^2}{4\epsilon}\frac{k_2}{k_3},
\end{equation*}
and 
\begin{equation*}
    \|x(t)\|\leq\big(\frac{k_2}{k_1}\big)^\frac{1}{2}\exp{\left\{-\frac{k_3}{2k_2}(t-t_0)\right\}} \|x(t_0)\|+\frac{(k_4 \bar d)}{2} \sqrt{\frac{k_2}{k_1k_3\epsilon}}.
\end{equation*}
\end{proof}

\section{Proof of Lemma 1-(iv)} \label{App1}
Given $n$ inertial vectors $v_{i}=R^{T}r_{i}$, and their corresponding desired vectors $v_{d,i} = R^{T}_{d}r_{i}$, where $r_{i}\in\mathcal{S}^{2}$ are constant inertial references for $i=1,2,\cdots ,n$. The following matrices can be written
\begin{align}
    H_{B} &= \left[ \sqrt{k_{1}}v_{1} \; \sqrt{k_{2}}v_{2} \;\cdots\; \sqrt{k_{n}}v_{n} \right] \in\mathbb{R}^{3\times n},\notag\\
    H_{D} &= \left[ \sqrt{k_{1}}v_{d,1} \; \sqrt{k_{2}}v_{d,2} \;\cdots\; \sqrt{k_{n}}v_{d,n} \right] \in\mathbb{R}^{3\times n},\notag\\
    H_{I} &= \left[ \sqrt{k_{1}}r_{1} \; \sqrt{k_{2}}r_{2} \;\cdots\; \sqrt{k_{n}}r_{n} \right] \in\mathbb{R}^{3\times n},\notag
\end{align}
then, it is held
\begin{align}
    \bar{W} &= H_{I}H^{T}_{I},\label{eq:WbarH}\\
    \varepsilon &= \sum^{n}_{i=1}k_{i}\left(1-v^{T}_{i}v_{d,i}\right) = \frac{1}{2}\sum^{n}_{i=1}k_{i}\|v_{i} - v_{d_{1}}\|^{2} \notag\\
    &= \quad \frac{1}{2}\mathrm{tr}\left(\left( H_{B} - H_{D}\right)^{T}\left( H_{B} - H_{D}\right)\right) . \label{eq:varEpsH}
\end{align}

Furthermore, being $H_{B}-H_{D} = (R^{T}-R^{T}_{d})H_{I}$, it can be solved for $R^{T}-R^{T}_{d}$, by using \eqref{eq:WbarH}, as follows
\begin{align}
    \left( H_{B}-H_{D}\right)H^{T}_{I} &= (R^{T}-R^{T}_{d})H_{I}H^{T}_{I} \notag\\
    &= (R^{T}-R^{T}_{d})\bar{W}, \notag \\
    \left( H_{B}-H_{D}\right)H^{T}_{I}\bar{W}^{-1} &= R^{T}-R^{T}_{d}, \label{eq:RRdT}
\end{align}
later, in view of \eqref{eq:WbarH}-\eqref{eq:RRdT}, it can be written
\begin{align}
    \|R-R_{d}\| &\leq \|R-R_{d}\|_{F} = \|R^{T}-R^{T}_{d}\|_{F} \notag\\
    &\leq \| H_{B}-H_{D}\|_{F}\|H^{T}_{I}\bar{W}^{-1}\|_{F} \notag\\
    &= \sqrt{\mathrm{tr}\left(\left( H_{B} - H_{D}\right)^{T}\left( H_{B} - H_{D}\right)\right)} \sqrt{\mathrm{tr}\left( \bar{W}^{-T}H_{I}H^{T}_{I}\bar{W}^{-1} \right)} \notag\\
    &=\sqrt{2\varepsilon}\sqrt{\mathrm{tr}\left( \bar{W}^{-T}\right)}=\sqrt{2\varepsilon \varpi} ,\notag 
\end{align}
where the fact that $\mathrm{tr}(A)=\mathrm{tr}(A^{T})$, and $\varpi \vcentcolon = \mathrm{tr}\left( \bar{W}^{-1}\right)>0$ were used. Therefore, according to \eqref{eq:bcond}, it yields
\begin{equation*}
    \|R-R_{d}\| \leq \sqrt{\frac{\varpi \beta}{\alpha_{1}}}\|z\|.
\end{equation*}

\bibliographystyle{apalike}
\bibliography{references}  

\end{document}